\tikzset{>=stealth}
\tikzstyle{node} = [circle, minimum size = 1.1mm, inner sep = 0mm, draw={black}, fill]
\tikzstyle{hyperedge} = [rectangle, minimum width = 5mm, minimum height = 5mm, draw, inner sep = 0mm]
\theoremstyle{theorem}
\newtheorem{theorem}{Theorem}
\newtheorem{lemma}{Lemma}
\newtheorem{corollary}{Corollary}
\newtheorem{proposition}{Proposition}
\theoremstyle{remark}
\newtheorem{example}{Example}%
\newtheorem{remark}{Remark}%
\theoremstyle{definition}
\newtheorem{definition}{Definition}%
\newcommand{\mycomment}[1]{}
\newcommand{\eqdef}{\mathrel{\mathop:}=}
\newcommand{\dom}{\mathrm{dom}}
\newcommand{\ran}{\mathrm{ran}}
\newcommand{\Nat}{\mathbb{N}}
\newcommand{\parto}{\rightharpoonup}
\newcommand{\shuffle}{||}
\newcommand{\UniSelA}{\sigma}
\newcommand{\UniSelB}{\theta}
\newcommand{\UniSetSelsB}{\Theta}
\newcommand{\UniClassHRG}{\mathcal{C}}
\newcommand{\UniProp}{\mathit{PROP}}
\newcommand{\NP}[1]{{\textup{NP}}}
\newcommand{\PSPACE}[1]{{\textup{PSPACE}}}
\newcommand{\EXPTIME}[1]{\textup{EXPTIME}}
\newcommand{\DTIME}{\mathop{\mathrm{DTIME}}}
\newcommand{\type}{\mathit{type}}
\newcommand{\lab}{\mathit{lab}}
\newcommand{\att}{\mathit{att}}
\newcommand{\ext}{\mathit{ext}}
\newcommand{\HRG}{\mathcal{HRG}}
\newcommand{\HRL}{\mathcal{HRL}}
\newcommand{\parikh}{\Psi}
\newcommand{\SG}{\mathrm{sg}}
\newcommand{\Language}{\mathit{L}}
\newcommand{\ass}{\mathit{ass}}
\newcommand{\unl}{\mathrm{unl}}
\newcommand{\Sym}{\mathrm{Sym}}
\title[Uniform Membership for HRGs and Related Decision Problems]{Uniform Membership for Hyperedge Replacement Grammars and Related Decision Problems}
\author{Tikhon Pshenitsyn}
\address{Steklov Mathematical Institute of RAS, 8 Gubkina st., Moscow, Russia 119333}
\email{tpshenitsyn@mi-ras.ru}
\address{Ivannikov Institute for System Programming of RAS, 25 Alexander Solzhenitsyn st., Moscow, Russia 109004}
\begin{document}
	
\begin{abstract}
	This paper investigates complexity of the uniform membership problem for hyperedge replacement grammars in comparison with other mildly context-sensitive grammar formalisms. It turns out that the complexity of this problem depends on how one defines a hypergraph. There are two commonly used definitions in the field, which differ in whether repetitions of attachment nodes of a hyperedge are allowed in a hypergraph or not. We show that, in general, the problem under consideration is EXPTIME-complete, even for string-generating hyperedge replacement grammars, but it is NP-complete if repetitions are not allowed. 
	
	We extend the developed proof techniques in order to prove a general meta-theorem: checking whether a given hyperedge replacement grammar generates a hypergraph satisfying a non-Parikh property is EXPTIME-hard. Non-Parikh properties are those that are not preimages of properties on Parikh vectors of hypergraphs. This includes any graph property relying significantly on structure of graphs, e.g.~connectivity, Eulerianity, Hamiltonianity, acyclicity. A tight upper bound is established for EXPTIME-compatible properties via Filter Theorem.
\end{abstract}

\maketitle

\section{Introduction}\label{section:introduction}

\emph{Mildly context-sensitive grammar formalism} is a class of grammars that 
\begin{itemize}
	\item generate all context-free languages and also the language $\{ww \mid w \in \Sigma^\ast\}$;
	\item generate only languages from P (i.e.~that can be parsed in polynomial time) satisfying the constant growth property \cite{Kallmeyer10}.
\end{itemize}
These requirements are supposed to provide an optimal trade-off between expressivity and parsing complexity for modelling natural languages. Many mildly context-sensitive grammar formalisms have been considered in the literature, and they can be divided into equivalence classes according to what class of languages they generate. One equivalence class contains tree-adjoining grammars (TAGs), combinatory categorial grammars (CCGs), and linear indexed grammars. Another one, which generates a strictly wider class of languages, includes linear context-free rewriting systems (LCFRSs), multiple context-free grammars (MCFGs), and deterministic tree-walking transducers (DTWTs). Let us call the class of languages generated by any of the latter formalisms \emph{multiple context-free languages}.

The definition of a mildly context-sensitive grammar formalism characterises a class of languages rather than a class of grammars. However, as noted in \cite[p.~50]{Kallmeyer10}, ``in real natural language applications, we often deal with very large grammars [\dots] therefore, for natural language processing, the complexity of the universal recognition problem is an important factor.'' Besides, in applications, a grammar is subject to regular updating, and it is desirable to know how adding new rules to a grammar affects parsing complexity. Finally, one would like to somehow compare LCFRSs, MCFGs, and DTWTs in terms of their complexity, given that they all generate the same class of languages. All this motivates studying the universal recognition problem, which, following \cite{BjorklundBE16}, we call the \emph{uniform membership problem}: given a grammar $G$ and a string $w$ as an input, check whether $w$ is generated by $G$. This problem has been studied for various mildly context-sensitive formalisms in several papers; we provide some of the known results in Table \ref{table:uniform-membership-mcsgf}.
\begin{table}[h!]
	\centering
	\begin{tabular}{|l|l|}
		\hline
		TAG & P \\
		\hline 
		CCG & \NP{}-complete\cite{KuhlmannSJ18} \\
		\hline\hline
		LCFRS & \PSPACE{}-complete \cite{KajiNSK92} \\
		\hline
		MCFG & \EXPTIME{}-complete \cite{KajiNSK92} \\
		\hline
		DTWT & \EXPTIME{}-complete \cite{BjorklundBE16} \\
		\hline
	\end{tabular}
	\caption{Uniform membership problem for mildly context-sensitive grammar formalisms.}
	\label{table:uniform-membership-mcsgf}
\end{table}

While TAGs can be parsed uniformly in polynomial time, this is not the case for any of the above formalisms generating multiple context-free languages, despite the fact that each multiple context-free language is in P. 

In this paper, we focus on the uniform membership problem for hyperedge replacement grammars, which, to my best knowledge, has not been previously studied in the literature.  Hyperedge replacement grammars is a natural and well studied generalisation of context-free grammars to hypergraphs \cite{DrewesKH97, Habel92, Engelfriet97}. It is known that HRGs generate only languages from \NP{}, including an \NP{}-complete hypergraph language \cite{DrewesKH97}. One can consider HRGs that generate only string graphs, i.e.~graphs of the form $\vcenter{\hbox{{\tikz[baseline=.1ex]{
				\foreach \i in {1,...,4}
				{
					\node[node, label=\ifnumequal{\i}{1}{left}{right}:{\tiny \ifnumequal{\i}{1}{$(1)$}{\ifnumequal{\i}{4}{$(2)$}{}}}] (V\i) at ($(0.8*\i-0.8,0)$) {};
				}
				\node at ($(0.8*1.5,0)$) {\dots};
				\draw[-latex, thick] (V1) -- node[above] {$a_1$} (V2);
				\draw[-latex, thick] (V3) -- node[above] {$a_n$} (V4);
}}}}$ (such a graph corresponds to the string $a_1 \ldots a_n$). If a HRG generates only string graphs, then one can associate a string language with it. It is known that string-generating HRGs generate exactly multiple context-free languages \cite{EngelfrietH91} so HRG is a mildly context-sensitive grammar formalism equivalent to LCFRS, MCFG, and DTWT. Thus, it is interesting to explore the uniform membership problem both for string-generating HRGs and for HRGs in general.

It turns out that complexity of the uniform membership problem for HRGs depends on the definition of a hypergraph used. If one considers repetition-free hypergraphs as is done in \cite{DrewesKH97} (i.e.~hypergraphs where attachment nodes of any hyperedge are distinct, i.e.~there are no loops in them), then the uniform membership problem for HRGs is \NP{}-complete (\cref{theorem:upper_bound_NP,theorem:lower_bound_NP}). However, if one allows loops in hypergraphs, then the universal membership problem becomes \EXPTIME{}-complete, even if input grammars are restricted to be string-generating (\cref{theorem:upper_bound_EXPTIME,theorem:lower_bound_EXPTIME}). 

Proofwise, the most interesting result is \EXPTIME{}-hardness of the uniform membership problem for string-generating HRGs (\cref{theorem:lower_bound_EXPTIME}). One might think that it is proved by simply constructing a polynomial reduction from the uniform membership problem for MCFGs or for DTWTs. However, this is not the case. Indeed, consider Table \ref{table:uniform!}, where complexity of the uniform membership problem for various formalisms is overviewed for the case of one-letter alphabet.
\begin{table}[h!]
	\centering
	\begin{tabular}{|c|c|c|}
		\hline
		& $\vert T \vert = 1$ & $\vert T \vert > 1$
		\\\hline
		LCFRS & P & \PSPACE{}-complete
		\\\hline
		MCFG & \multicolumn{2}{c|}{\EXPTIME{}-complete}
		\\\hline
		DTWT & \multicolumn{2}{c|}{\EXPTIME{}-complete}
		\\\hline
		string-generating & P & \EXPTIME{}-complete
		\\
		HRG & (\cref{remark:one-letter-sg-HRG}) & (\cref{theorem:lower_bound_EXPTIME})
		\\\hline
	\end{tabular}
	\caption{Uniform membership problem for grammars generating multiple context-free languages depending on the size of a terminal alphabet $T$. Note that, for arbitrary HRGs over a one-letter alphabet, the problem is \EXPTIME{}-complete (see \cref{remark:one-letter-HRG}).}
	\label{table:uniform!}
\end{table}
If there was a polynomial procedure that transforms a MCFG (a DTWT) into an equivalent string-generating HRG, then it would be natural to assume that it does not depend on the size of the grammar's terminal alphabet $T$. However, for a one-letter terminal alphabet $T$, this would give one a polynomial reduction of an \EXPTIME{}-complete problem to a problem from P, which is a contradiction. Instead, the problem used for a reduction is the following one, shown to be \EXPTIME{}-hard in \cref{proposition:EXPTIME-LCFRS}:
\begin{quote}
	\emph{Problem.} Given a LCFRS $G$ over the two-letter alphabet $\{a,b\}$, check if the intersection $\Language(G) \cap b\{a,b\}^\ast$ is non-empty.
\end{quote}

The argument used in the proof of \cref{theorem:lower_bound_EXPTIME} turns out to have great potential for generalizations, namely, it allows us to prove \EXPTIME{}-hardness of a wide range of decision problems regarding HRGs (\cref{theorem:main-meta}). In particular, we show that checking whether a given HRG generates some/only string/connected/Eulerian/Hamiltonian/acyclic graphs is \EXPTIME{}-hard (\cref{corollary:non-Parikh}). In fact, this holds for any graph property $P$ such that there are two graphs with the same Parikh vector one of whose satisfies $P$ while the other does not. We call such properties \emph{non-Parikh}.

In order to match this lower \EXPTIME{} bound, we develop a notion of \EXPTIME{}-compatible properties, a time-bounded version of compatible properties defined in \cite[Ch.~6, Sec.~2]{Habel92}, and show that deciding if a HRG generates at least one graph satisfying such a property is in \EXPTIME{} (\cref{theorem:bounded-filter-theorem}). In particular, this result and the previous one imply that checking whether a HRG is string-generating is \EXPTIME{}-complete (\cref{corollary:string-generating}), since the property of not being a string graph is both non-Parikh and \EXPTIME{}-compatible.

\subsection*{Related Work}

There is an article ``Uniform parsing for hyperedge replacement grammars'' \cite{BjorklundDES21}, which is concerned with finding a subclass of HRGs for which a \emph{polynomial-time} uniform parsing algorithm exists. The goal of the present paper is different: we aim to study the uniform parsing problem for all HRGs without expecting it to be polynomial, of course. In personal communication \cite{Drewes_personal}, prof.~Frank Drewes encouraged me to study complexity of the uniform membership problem for HRGs, and he conjectured its being related to the membership problem for right-linear grammars over permutation groups. The proof of \cref{theorem:upper_bound_NP} indeed relies on a similar idea.

\subsection*{Structure of the Paper}

The paper is organised as follows. In \cref{section:preliminaries}, we define hypergraphs and hyperedge replacement grammars. In \cref{section:chain}, we show how transformation monoids and permutation groups can be modelled in HRGs using chain productions, which is used for proving upper bounds; this also provides some intuition behind the difference between repetition-free HRGs and general ones. In \cref{section:upper-bounds}, we prove upper bounds for the uniform membership problem. In \cref{section:lower-bounds}, we prove the tight lower bounds. In \cref{section:complexity-decision-problems}, we establish complexity results for various decision problems regarding HRGs, introducing, in particular, \emph{non-Parikh graph properties} and \emph{\EXPTIME{}-compatible graph properties}. In \cref{section:conclusion}, we conclude.

\section{Preliminaries}\label{section:preliminaries}

We denote the set $\{1,\dotsc,n\}$ by $[n]$. The following definitions are mainly taken from \cite{Engelfriet97} (with slight modifications) because they are the most general among those used in the classical textbooks \cite{Habel92,DrewesKH97,Engelfriet97} and the handiest to reason about.
\begin{definition}[$\Sigma$-typed alphabet]
	Given an alphabet $\Sigma$ of \emph{selectors}, a $\Sigma$-typed alphabet is a set $C$ along with a function $\type:C \to \mathcal{P}(\Sigma)$ such that $\type(c)$ is finite for $c \in C$.
\end{definition}

\begin{definition}[Hypergraph]
	Let $C$ be a finite $\Sigma$-typed alphabet of hyperedge labels. A \emph{hypergraph} over $C$ is a tuple $H = \langle V_H,E_H, \lab_H, \att_H, \ext_H\rangle$ where  $V_H$ is a finite set of nodes; $E_H$ is a finite set of hyperedges; for each $e \in E_H$, $\att_H(e) : \Sigma \parto V_H$ is a partial function with a finite domain; $\lab_H:E_H \to C$ is a labeling function such that $\type(\lab_H(e))=\dom(\att_H(e))$; $\ext_H : \Sigma \parto V_H$ is a partial function with a finite domain. Elements of $\ran(\ext_H)$ are called \emph{external nodes}. Let $\type(H) \eqdef \dom(\ext_H)$ and let $\type_H(e) \eqdef \dom(\att_H(e))$ for each $e \in E_H$.
	
	The set of hypergraphs over $C$ is denoted by $\mathcal{H}(C)$.
\end{definition}
In drawings of hypergraphs, nodes are depicted as black circles and hyperedges are depicted as labeled rectangles. When depicting a hypergraph $H$, we draw a line with a label $\UniSelA$ from $e$ to $v$ if $\att_H(e)(\UniSelA)=v$. External nodes are represented by numbers in round brackets: if $ext_H(\UniSelA)=v$, then we mark $v$ as $(\UniSelA)$. 

\begin{definition}[Graph]
	If $\type_H(e) = \{1,2\}$ for a hyperedge $e$, then this hyperedge is called an \emph{edge} and it is depicted by an arrow going from $\att_H(e)(1)$ to $\att_H(e)(2)$. A hypergraph $H$ is a \emph{graph} if each its hyperedge is an edge.
\end{definition}

Below, we define a basic hypergraph called a \emph{handle} in \cite{DrewesKH97}.
\begin{definition}[Handle]
	Given $a \in C$, $a^\bullet$ is a hypergraph such that $V_{a^\bullet} = \type(a)$; $E_{a^\bullet} = \{e\}$ with $\type_{a^\bullet}(e)=\type(a)$; $\att_{a^\bullet}(\UniSelA)=\ext_{a^\bullet}(\UniSelA) = \UniSelA$ for $\UniSelA \in \type(a)$.
\end{definition}

We shall be particularly concerned with string graphs which are graphs with linear structure representing strings.

\begin{definition}[String graph]
	\emph{A string graph $\SG(w)$ induced by a string $w=a_1\dots a_n$} is defined as follows: $V_{\SG(w)} = \{v_0,\dotsc,v_n\}$, $E_{\SG(w)} = \{s_1,\dotsc,s_n\}$; $\type(s_i) = \type(\SG(w)) = \{1,2\}$, $\att_{\SG(w)}(s_i)(1)=v_{i-1}$, $\att_{\SG(w)}(s_i)(2)=v_{i}$, $\lab_{\SG(w)}(s_i)=a_i$ (for $i = 1, \ldots, n$); $\ext_{\SG(w)}(1)=v_0$, $\ext_{\SG(w)}(2)=v_n$. 
\end{definition}

Now, let us recall the standard quotient construction used to define hyperedge replacement.
\begin{definition}[Hypergraph quotient]
	Let $H$ be a hypergraph and let $R$ be a binary relation on $V_H$. Let $\equiv_R$ be the smallest equivalence relation on $V_H$ containing $R$. Then $H/R = H^\prime$ is the following hypergraph: $V_{H^\prime} = \{[v]_{\equiv_R} \mid v \in V_H\}$; $E_{H^\prime} = E_H$; $\lab_{H^\prime} = \lab_H$; $\att_{H^\prime}(e)(s) = [\att_{H}(e)(s)]_{\equiv_R}$; $\ext_{H^\prime}(s) = [\ext_{H}(s)]_{\equiv_R}$.
\end{definition}

\begin{definition}[Hyperedge replacement]
	Let $H,K$ be two hypergraphs over $C$; let $e \in E_H$ be a hyperedge such that $\type(e) = \type(K)$. Then the \emph{replacement of $e$ by $K$ in $H$} (the result being denoted by $H[e/K]$) is defined as follows:
	\begin{enumerate}
		\item Remove $e$ from $H$ and add a disjoint copy of $K$. Formally, let $L$ be the hypergraph such that $V_L = V_H \sqcup V_K$, $E_L = (E_H \setminus \{e\}) \sqcup E_K$, $\lab_L$ is the restriction of $\lab_H \cup \lab_K$ to $E_L$, $\att_L$ is the restriction of $\att_H \cup \att_K$ to $E_L$, and $\ext_L = \ext_H$.
		\item Glue the nodes that are incident to $e$ in $H$ with the external nodes of $K$. Namely, let $H[e/K] \eqdef L/R$ where $R = \{(\att_H(e)(s),\ext_K(s)) \mid s \in \type(e)\}$.
	\end{enumerate}
\end{definition}

\begin{definition}[Hyperedge replacement grammar]
	A \emph{hyperedge replacement grammar} is a tuple $\Gamma = \langle N, T, \Sigma, P, S\rangle$ where
	\begin{itemize}
		\item $N$ and $T$ are finite disjoint $\Sigma$-typed alphabets;
		\item $P$ is a finite set of \emph{productions} of the form $X \to D$ where $X \in N$ and $D$ is a hypergraph over $N \cup T$ such that $\type(X) = \type(D)$;
		\item $S \in N$ is the initial nonterminal label.
	\end{itemize}
	If $H$ is a hypergraph with $e \in E_H$ such that $\lab_H(e) = X$ and if $p = (X \to D) \in P$, then we say that $H[e/D]$ can be directly derived from $H$ and we write $H \Rightarrow H[e/D]$ (or $H \Rightarrow_{\Gamma} H[e/D]$, or $H \Rightarrow_p H[e/D]$ if we want to specify a grammar or a production used at this step). The language $\Language(\Gamma)$ generated by $\Gamma$ consists of hypergraphs $H \in \mathcal{H}(T)$ such that $S^\bullet \Rightarrow^\ast H$.
	
	$\Gamma$ is said to be of \emph{order $r$} if $\type(A) \le r$ for all $A \in N$ \cite{DrewesKH97}.
	
	The class of all HRGs is denoted by $\HRG$.
\end{definition}
\begin{remark}
	In this paper, we do not distinguish carefully between abstract and concrete hypergraphs, following e.g.~\cite{DrewesKH97}. For example, it is usual and natural to assume that $\Language(\Gamma)$ consists of abstract hypergraphs (classes of isomorphic hypergraphs) rather than of concrete ones; however, when we write $H \in \Language(\Gamma)$, we treat $H$ as a concrete hypergraph, in particular, we refer to its nodes or hyperedges.
\end{remark}

\begin{definition}[Linearisation]
	A \emph{linearisation of a HRG $\Gamma = \langle N, T, \Sigma, P, S\rangle$} is a context-free grammar $\langle N,T,P^\prime,S\rangle$ where $P^\prime$ consists of productions of the form $A \to \lab_H(e_1) \ldots \lab_H(e_k)$ for $(A \to H) \in P$, with $e_1,\ldots,e_k$ being the hyperedges of $H$ listed in arbitrary order. (A linearisation of a HRG is not unique.)
\end{definition}

As we mentioned in \cref{section:introduction}, different textbooks and articles in the field of hyperedge replacement grammars use different definitions of a hypergraph. In particular, it is common to require that a hypergraph is repetition-free:
\begin{definition}[Repetition-free hypergraph]
	A hypergraph $H$ is \emph{repetition-free} if the function $\ext_H$ is injective and $\att_H(e)$ is injective for every $e \in E_H$.
\end{definition}
In \cite{DrewesKH97}, hypergraphs are repetition-free by definition.
\begin{definition}\label{definition:repetition-free}
	An HRG is \emph{repetition-free} if, for any its production $X \to D$,  $D$ is repetition-free.
\end{definition}
Sometimes, we shall call arbitrary HRGs \emph{repetition-allowing} to emphasize the difference between those and repetition-free HRGs.

The following is proved in \cite[Theorem 3.15]{Engelfriet97}:
\begin{quote}
	For every HRG $\Gamma = \langle N, T, \Sigma, P, S\rangle$ one can construct a repetition-free HRG $\Gamma^\prime$ such that 
	$\Language(\Gamma^\prime) = \{H \in \Language(\Gamma) \mid H~\text{is repetition-free}\}$.
\end{quote}
The method of constructing $\Gamma^\prime$ from $\Gamma$ presented in \cite{Engelfriet97} is not polynomial-time; we shall analyse it in \cref{section:upper-bounds}.

\section{On Empty and Chain Productions in HRGs}\label{section:chain}

To give the reader an initial feeling of the difference between repetition-free grammars and repetition-allowing ones, let us define the standard notions of empty and chain productions for HRGs \cite{Habel92}.
\begin{definition}
	A production $X \to D$ is \emph{empty} if $E_D = \emptyset$ and $V_D = \ran(\ext_D)$.
\end{definition}
\begin{definition}\label{definition_chain}
	A production $X \to D$ is \emph{chain} if $E_D = \{e\}$ and $V_D = \ran(\ext_D)$.
\end{definition}
In the case of ordinary context-free grammars for strings, there is only one empty production with $X$ in the left-hand side (which is $X \to \varepsilon$) and only one chain production with $X$ in the left-hand side and $Y$ in the right-hand side (which is $X \to Y$). This allows one to find in polynomial time all pairs $(A,B)$ such that there exists a derivation of the form $A=A_1 \Rightarrow \dotsc \Rightarrow A_k \Rightarrow B$ where $A_i \in N$ and either $B \in N$ or $B = \varepsilon$. Consequently, one can effectively eliminate empty and chain productions and obtain a context-free grammar where each rule strictly increases the size of a derivable string. If one could efficiently eliminate empty and chain productions in HRGs, this would yield an \NP{} algorithm for checking membership immediately. However, eliminating empty and chain productions is a costly procedure in the hypergraph case.
\begin{definition}\label{definition_production-function}
	Recall that $[n] = \{1,\dotsc,n\}$. Given a function $f:[n] \to [n]$, the production $p(f)$ is of the form $S \to F(S,f)$ where the hypergraph $F = F(S,f)$ is defined as follows: $V_{F} = [n]$, $E_{F} = \{e_f\}$, $\lab_{F}(e_f) = S$, $\att_{F}(e_f)(j)=f(j)$, $\ext_{F}(j) = j$. 
\end{definition}
The production $p(f)$ is a chain production. A simple but important observation is that the composition of rules $p(f)$ and $p(g)$ corresponds to the composition of functions $f g$. 
\begin{lemma}\label{lemma_composition}
	Let $S^\bullet \Rightarrow_{p(f)} L \Rightarrow_{p(g)} K$. Then $K = F(S,f g)$.
\end{lemma}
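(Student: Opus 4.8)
The plan is to unfold the two hyperedge replacement operations according to the definitions and check that the resulting hypergraph is literally $F(S, fg)$ (up to the identifications forced by the quotient construction). First I would apply the first derivation step: since $S^\bullet$ has a single hyperedge $e$ with $\att_{S^\bullet}(e)(\sigma) = \ext_{S^\bullet}(\sigma) = \sigma$ for $\sigma \in [n]$, replacing $e$ by $F(S,f)$ amounts to taking a disjoint copy of $F(S,f)$ and gluing, for each $s \in [n]$, the node $\att_{S^\bullet}(e)(s) = s$ of $S^\bullet$ with the external node $\ext_{F(S,f)}(s) = s$ of the copy. Because every node of $S^\bullet$ is external and is glued to the correspondingly-numbered external node of $F(S,f)$, the quotient identifies the two copies of $[n]$ node-for-node, so $L$ is isomorphic to $F(S,f)$ itself: one hyperedge $e_f$ labelled $S$ with $\att_L(e_f)(j) = f(j)$ and $\ext_L(j) = j$.

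Next I would apply the second step, replacing the hyperedge $e_f$ of $L$ by $F(S,g)$. Here the gluing relation is $R = \{(\att_L(e_f)(s), \ext_{F(S,g)}(s)) \mid s \in [n]\} = \{(f(s), s') \mid s \in [n]\}$, where $s'$ denotes the copy of $s$ in $F(S,g)$. So the node $f(s)$ of $L$ is identified with the node $s$ of the copy of $F(S,g)$. In $H[e_f/F(S,g)]$, the surviving hyperedge is $e_g$ from $F(S,g)$, with $\att(e_g)(j) = g(j)$ in the copy's numbering; after the identification, the copy's node $g(j)$ is the class of $f(g(j))$, i.e.\ the node $(fg)(j)$ of the original $[n]$. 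The external map is $\ext_L$, which sends $j \mapsto j$, and since each original node $j$ is a representative of its class, the external nodes are again $1, \dots, n$. Hence the result has node set $[n]$, one $S$-labelled hyperedge with attachment $j \mapsto (fg)(j)$, and identity external map — that is exactly $F(S, fg)$.

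I do not expect a genuine obstacle here; the content of the lemma is that hyperedge replacement composes covariantly, and the only care needed is bookkeeping with the quotient $L/R$: one must check that no node of $[n]$ gets accidentally merged with another (the relation $R$ only pairs an original node with a fresh-copy node, never two original nodes, and $f$, $g$ being total functions $[n]\to[n]$ means every fresh node is hit exactly once), so the equivalence classes are in bijection with $[n]$ via the original representatives. Once that is observed, matching attachment and external functions against Definition \ref{definition_production-function} is immediate. If desired, one can streamline the whole argument by noting that $S^\bullet = S^\bullet[e / F(S, \mathrm{id})]$ trivially, so that the single relevant computation is the second step with $f$ in place of the identity, applied to $F(S,f)$ and $F(S,g)$.
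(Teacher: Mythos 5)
Your proposal is correct and follows essentially the same route as the paper: unfold the two replacement steps, observe that $L = F(S,f)$, and then trace the attachment map of $e_g$ through the quotient relation $R = \{(f(s),s')\}$ to obtain $\att_K(e_g)(j) = \ext_K(f(g(j)))$. Your extra observation that $R$ never identifies two original nodes (so the quotient's node set is in bijection with $[n]$ via the original representatives) is a point the paper leaves implicit, but it is the same argument.
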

\begin{proof}
	Clearly, $L = F(S,f)$ and $K = F(S,f)[e_f/F(S,g)]$. The latter hypergraph is obtained from the former by removing $e_f$, adding $e_g$ and identifying nodes by means of $\equiv_R$ where $R = \{(\att_{L}(e_f)(s),\ext_{F(S,g)}(s)) \mid s=1,\dotsc,n\}$. Consequently, 
	\begin{multline*}
		\att_K(e_g)(s) = [\att_{F(S,g)}(e_g)(s)]_{\equiv_R} = [\ext_{F(S,g)}(g(s))]_{\equiv_R} = [\att_{L}(e_f)(g(s))]_{\equiv_R} \\ =
		[\ext_{L}(f(g(s)))]_{\equiv_R} = \ext_K(f(g(s))). \qedhere
	\end{multline*}
\end{proof}
Therefore, one can simulate transformation monoids over finite sets using chain productions. To recall, a transformation monoid consists of functions $f: X \to X$ (called transformations of $X$) for some fixed $X$, with the monoid operation being composition. Given any set $\mathcal{F}$ of transformations of $X$, let $\langle \mathcal{F} \rangle$ denote the least submonoid contaning $\mathcal F$.

\begin{example}\label{example_all_permutations}
	Let $f,g: [n] \to [n]$ be some generators of the permutation group $\Sym(n)$. Consider the grammar $G = \langle\{S\},\{a\},\Sigma, P,S\rangle$ where $\Sigma = \type(S)=\type(a)=[n]$ and $P = \{p(f),p(g),S \to a^\bullet\}$. 
	The grammar $G$ has the size $\mathcal{O}(n)$. The language of this grammar consists of hypergraphs of the form $F(a,h_{1}\dotsc h_k)$ where $h_i \in \{f,g\}$ (this follows from \cref{lemma_composition}). Thus, $\Language(G) = \{F(a,h) \mid h \in \Sym(n)\}$, hence $\vert \Language(G) \vert = n!$. This implies that eliminating empty and chain productions in $G$ necessarily leads to exponential growth of the number of productions.
\end{example}

Let us consider the following HRG.
\begin{definition}
	Let $\mathcal{F} \subseteq \{f: [n] \to [n]\}$ be a set of transformations of $[n]$. Then $G(\mathcal{F}) = \langle\{S\},\{a\},\Sigma, P(\mathcal{F}),S\rangle$ where $\Sigma = \type(S)=\type(a)=[n]$ and $P(\mathcal{F}) = \{p(f) \mid f \in \mathcal{F} \} \cup \{S \to a^\bullet\}$.
\end{definition}
\begin{proposition}\label{proposition:submonoid}
	$\Language(G(\mathcal{F})) = \{F(a,h) \mid h \in \langle \mathcal{F} \rangle\}$. 
\end{proposition}
This follows directly from \cref{lemma_composition}.
As a consequence, checking whether the hypergraph $F(a,h)$ belongs to $\Language(G(\mathcal{F}))$ is equivalent to checking whether $h \in \langle\mathcal{F} \rangle$. The latter problem is \PSPACE{}-complete \cite[Theorem 3.2.6]{Kozen77} and, consequently, the uniform membership problem for HRGs is \PSPACE{}-hard.

Now, let us turn to repetition-free HRGs. Note that $G(\mathcal{F})$ is repetition-free if and only if $\mathcal{F}$ consists of bijections. This means that $\langle \mathcal{F} \rangle$ is a subgroup of the symmetric group $\Sym(n)$. The problem whether $h \in \langle \mathcal{F} \rangle$ is well known to be solvable in polynomial time using the Schreier-Sims algorithm \cite{Sims71}. However, $G(\mathcal{F})$ uses only one nonterminal symbol $S$; clearly, one could use many nonterminals. This is why we need to consider the following problem.

\begin{quote}
	\textbf{The rational subset membership problem for symmetric groups \textsc{RatSym}}
	
	Input: the set $\mathcal{F} \subseteq \Sym(n)$; a nondeterministic finite automaton $\mathcal{A} = \langle Q , \Delta, I, F \rangle$ over the alphabet $\mathcal{F}$; a permutation $\UniSelA \in \Sym(n)$.
	
	Question: is there a word $w = \UniSelA_1\dotsc \UniSelA_k \in \mathcal{F}^\ast$ accepted by $\mathcal{A}$ such that $\UniSelA_1\cdot \dotsc \cdot \UniSelA_k = \UniSelA$?
\end{quote}
It is known that \textsc{RatSym} is \NP{}-complete \cite{Khashaev22, LohreyRZ22}. We shall use this result in proving the \NP{} upper bound for repetition-free HRGs, since it helps to handle chain productions.

\section{Upper Bounds}\label{section:upper-bounds}

We start with proving upper bounds on the complexity of the uniform membership problem:

\begin{theorem}\label{theorem:upper_bound_EXPTIME}
	The uniform membership problem for HRGs is in \EXPTIME{}.
\end{theorem}

\begin{theorem}\label{theorem:upper_bound_NP}
	The uniform membership problem for repetition-free HRGs is in \NP{}.
\end{theorem}

Proving \cref{theorem:upper_bound_EXPTIME} is done by eliminating empty and chain productions in a HRG and analysing the complexity of this procedure \cite{Habel92,Engelfriet97}. Despite this being a fairly easy exercise, let us recall the proof anyway to make the paper self-contained.

\begin{proof}[Proof of \cref{theorem:upper_bound_EXPTIME}]
	Let $\Gamma = \langle N, T, \Sigma, P, S\rangle$ be a HRG. 
	\begin{enumerate}
		\item\label{step-atmosttwo} We construct a grammar $\Gamma_1 = \langle N_1, T, \Sigma_1, P_1, S_1\rangle$ equivalent to $\Gamma$ such that the start symbol $S_1$ does not occur in right-hand sides of productions and such that each right-hand side of a production in $\Gamma_1$ has at most two hyperedges \cite[Proposition 3.13]{Engelfriet97}. This is done in the same manner as for context-free grammars. Namely, if the right-hand side of a production has more than two hyperedges, then two of them $e_1,e_2$ are replaced by a single one $f$ such that $\type(f) = \{(i,s) \mid i \in \{1,2\}, s \in \type(e_i)\}$\footnote{If $\Gamma$ is repetition-free, then this construction can be modified so that $\Gamma_1$ is repetition-free too.}. This procedure is done in polynomial time (as for string context-free grammars), and hence $\vert \Gamma_1 \vert \le p_1(\vert \Gamma \vert)$ for some polynomial $p_1$ not depending on $\Gamma$. Besides, it follows from the construction that $\Gamma_1$ is of order $\max\{\vert \type(e_1) \vert + \ldots + \vert \type(e_k) \vert \mid (A \to G) \in P,\,E_G = \{e_1,\ldots,e_k \}\}$, which is not greater than $\mathit{or}_1 = \vert \Gamma \vert^2$; thus, $\Gamma_1$ is of order $\mathit{or}_1$.
		\item\label{step-repetition-free} We construct an equivalent repetition-free HR-grammar $\Gamma_2$ using the construction from \cite[Theorem 3.15]{Engelfriet97}. A careful analysis of the proof of that theorem shows that the size of $\Gamma_2$ is bounded by $2^{p_2(\vert \Gamma_1 \vert)}$ for some polynomial $p_2(n)$. Also, the construction from \cite{Engelfriet97} preserves the property of having at most two hyperedges in each right-hand side of a production.
		\item\label{step-elimempty} We eliminate empty productions in $\Gamma_2$ using the construction from \cite[Theorem 1.5]{Habel92}. Let $\Gamma_3$ be the grammar obtained from $\Gamma_2$ by eliminating empty productions. Eliminating empty productions for repetition-free HR-grammars such that there are at most two hyperedges in each right-hand side of a production is done in polynomial time, as well as in the string case; this is a consequence of the fact that there is only one empty production with a given nonterminal symbol in the left-hand side. So, $\vert \Gamma_3 \vert \le p_3(\vert \Gamma_2 \vert)$ for some polynomial $p_3(n)$. 
		\item We eliminate chain productions in $\Gamma_3$ using the construction from \cite[Theorem 1.8]{Habel92}. Let $\Gamma_3 = \langle N_3, T, \Sigma_3, P_3,S_3\rangle$. The construction computes, in a deterministic way, the set
		$
		\mathit{CHAIN} = \{A \to H \mid (A\to H)~\text{is chain}, \, A \Rightarrow^\ast_{\Gamma_3} H\}
		$.
		Clearly, $\vert \mathit{CHAIN} \vert \le \vert N_3 \vert^2 \cdot \mathit{or}_3^{2\mathit{or}_3}$ where $\mathit{or}_3$ is the order of $\Gamma_3$. It holds that $\mathit{or}_3 \le \mathit{or}_1 = \vert \Gamma \vert^2$, thus $\vert \mathit{CHAIN} \vert \le \vert \Gamma_3 \vert^2 \cdot 2^{4\vert \Gamma \vert^2}$. Finally, we construct $\Gamma^\prime$ where a production is either a non-chain one from $P_3$ or it is a composition of a production from $\mathit{CHAIN}$ with a non-chain one from $P_3$. Thus, the total number of productions in $\Gamma^\prime$ does not exceed 
		$2 \vert \Gamma_3 \vert^3 \cdot 2^{2\vert \Gamma \vert^2}$.
	\end{enumerate}
	Summing up, we see that $\vert \Gamma^\prime \vert  \le 2^{p(\vert \Gamma \vert)}$ for some polynomial $p(n)$, so the whole procedure is done in exponential time. 
	
	Now, if $H \in \Language(\Gamma^\prime)$, then the size of a derivation of $H$ is at most $\vert H \vert$ (since each rule application of $\Gamma^\prime$ increases the size of a hypergraph). Therefore, the number of possible derivations of $H$ is upper bounded by $\vert P^\prime \vert^{\vert H \vert} \le 2^{\mathit{poly}(\vert \Gamma \vert) \cdot \vert H \vert}$. So, in order to solve the uniform membership problem, we simply consider exponentially many derivations, and, for each of them, check if this is a correct derivation of $H$.
\end{proof}

The upper bound for repetition-free HRGs is more subtle.
\begin{proof}[Proof of \cref{theorem:upper_bound_NP}]
	Let $(\Gamma,H)$ be an input where $\Gamma = \langle N, T, \Sigma, P, S\rangle$ is a HRG. First, we do steps \ref{step-atmosttwo} and \ref{step-elimempty} of the procedure described in the proof of \cref{theorem:upper_bound_EXPTIME}, which require only polynomial time. Let $\hat{\Gamma} = \langle \hat{N}, T, \hat{\Sigma}, \hat{P}, \hat{S}\rangle$ be the resulting grammar; it is equivalent to $\Gamma$, it is repetition-free, and it does not have empty productions. Our goal is to deal with chain productions in $\hat{P}$. Let us assume without loss of generality that $\hat{\Sigma} = [K]$ for some $K \in \mathbb{N}$ and that, for each symbol $A \in \hat{N}$, $\type(A)=[k]$ for some $k \le K$ (this is a matter of renaming selectors, which is doable in linear time). 
	
	Any chain production in $\hat{\Gamma}$ is either of the form $X \to F(Y,\UniSelA)$ for some bijection $\UniSelA$ or it is of the form $X \to H$ where $H$ contains some isolated nodes (cf. Definition \ref{definition_chain}). Let us call productions of the first kind \emph{permutative}. Given $X \in N$ and $Y \in N \cup T$ such that $\type(X) = \type(Y)=[t]$ for some $t \in \Nat$, let us define the NFA $\mathcal{A}({X,Y}) = \langle N \cup T, \Delta_{X,Y}, \{X\}, \{Y\} \rangle$ where $\Delta_{X,Y}$ consists of triples $(W,\UniSelA,Z)$ such that $W \to F(Z,\UniSelA)$ belongs to $\hat{P}$ and $\type(W)=[t]$. The alphabet $\mathcal{F} \subseteq \Sym(t)$ over which this automaton is defined is a set consisting of functions $\UniSelA$ such that $(W,\UniSelA,Z) \in \Delta_{X,Y}$ for some $W,Z$. The size of $\mathcal{A}({X,Y})$ is linear w.r.t. the size of $\Gamma$. 
	Since \textsc{RatSym} is \NP{}-complete, there is a nondeterministic polynomial algorithm $\mathcal{T}$ that takes $\mathcal{A}({X,Y})$ and $\UniSelA \in \Sym(t)$ as an input (where $\type(X) = \type(Y) = t$) and answers whether there are $\UniSelA_1,\ldots,\UniSelA_k$ such that $\UniSelA = \UniSelA_1 \cdot \dotsc \cdot \UniSelA_k$ and $\UniSelA_1\dotsc \UniSelA_k$ is accepted by $\mathcal{A}({X,Y})$.
	
	Let $H \in \Language(\hat{\Gamma})$. Let us fix a derivation of $H$ and consider the corresponding derivation tree. We do not formally define how derivations trees are presented, referring to common sense and to \cite{Kreowski86}. Suppose that, along some branch of the derivation tree, there is a sequence of chain productions $p_1,\dotsc,p_k$ ($p_i = X_i \to D_i$), applied consecutively. 
	Let us call such a sequence of productions a \emph{chain sequence} if it is maximal, i.e.~no larger sequence of chain productions subsumes this one.
	
	Some of the productions among $p_1,\ldots,p_k$ are non-permutative. Let us divide the chain sequence $p_1,\dotsc,p_k$ into blocks consisting of permutative productions separated by non-permutative ones:
	$$
	p_1,\dotsc,p_k = \pi^1,r_1,\pi^2,\dotsc,\pi^{m-1},r_{m-1},\pi^m.
	$$
	Here $\pi^i = p_1^i,\dotsc,p_{l_i}^i$ is a sequence of permutative productions and $r_1,\dotsc,r_{m-1}$ are non-permutative productions ($m \ge 1$). Now, informally, the idea is to replace each $\pi^i$ in the derivation tree with a polynomial-size certificate given by an \NP{} algorithm for the rational subset membership problem. 
	
	To be more precise, let us give names to hyperedges and hypergraphs that appear when one applies rules of the chain sequence. For $i=1,\ldots,m$, denote the hyperedge to which the first rule in $\pi^i$ is applied by $e^\prime_{i-1}$; in particular, $e^\prime_0$ is the hyperedge to which the first rule in the chain sequence is applied. It holds that $p_j^i = X_j^i \to F(X_{j+1}^i,\UniSelA_j^i)$ for some $X_j^i,X_{j+1}^i,\UniSelA_j^i$, since all these rules are permutative. Therefore, applying rules of $\pi^i$ results in replacing $e^\prime_{i-1}$ by $F(Y^i,\UniSelA^i)$ where $Y^i \eqdef X^i_{l_i+1}$, $\sigma^i \eqdef \sigma^i_1\cdot \dotsc \cdot \sigma^i_{l_i}$.
	Then, in the fixed derivation tree, replace the sequence of edges corresponding to $\pi^i$ by a single edge labeled by $(X^i_1,Y^i,\sigma^i)$, meaning that one should run the algorithm $\mathcal{T}$ on the input $\mathcal{A}\left({X^i_1,Y^i}\right)$ and $\sigma^i$ and thus to verify that $e^\prime_{i-1}$ can be replaced by $F(Y^i,\UniSelA^i)$.
	
	Note that the number $(m-1)$ of non-permutative productions in the chain sequence does not exceed the order $R$ of $\hat{\Gamma}$. Indeed, applying each non-permutative chain production decreases the number of attachment nodes of a hyperedge.
	
	Applying any production from $\hat{\Gamma}$ does not decrease the number of nodes or the number of hyperedges in the hypergraph (since there are no empty productions, and the grammar is repetition-free, so nodes cannot be glued). Besides, each production which is not chain strictly increases the size of the hypergraph. Each chain sequence is either followed by a non-chain production applied to the resulting hyperedge or it ends by a terminal symbol. Therefore, the total number of edges in the modified derivation tree of $H$ (where each sequence of permutative productions is replaced by a single edge) is at most $\vert H \vert + 2\vert H \vert \cdot (2R+1) = (4R+3) \vert H \vert$, hence the size of the derivation tree is polynomial w.r.t. the sizes of $\hat{\Gamma}$ and $H$.
\end{proof}

\section{Lower Bounds}\label{section:lower-bounds}

We proceed with proving the lower bounds. It turns out that they are attained already for string-generating HRGs:

\begin{theorem}\label{theorem:lower_bound_EXPTIME}
	The uniform membership problem for string-generating HRGs is \EXPTIME{}-hard.
\end{theorem}
\begin{theorem}\label{theorem:lower_bound_NP}
	The uniform membership problem for repetition-free string-generating HRGs is \NP{}-hard.
\end{theorem}

One might expect that \cref{theorem:lower_bound_EXPTIME} could be proved by a reduction from the uniform membership problem for some other mildly context-sensitive grammar formalism. Table \ref{table:uniform-membership-mcsgf} suggests two candidates, namely, multiple context-free grammars (MCFGs) and deterministic tree-walking transducers (DTWTs) since the corresponding problem is \EXPTIME{}-complete for both of them. A transformation of DTWTs to string-generating HRGs is presented in \cite[Lemma 5.5]{EngelfrietH91}. It, however, requires exponential time since, in a HRG constructed from a DTWT according to this procedure, nonterminals range over tuples of even length $\le 2s$ that consist of DTWT's states ($s$ is the crossing number of the DTWT). Hence,  if $Q$ is the set of states of the DTWT, then the size of the HRG is $\ge \vert Q \vert^{2s}$, so it grows exponentially in $s$. In fact, there is no polynomial-time transformation of a DTWT (or a MCFG) into an equivalent string-generating HRG as the following remark implies.

\begin{remark}\label{remark:one-letter-sg-HRG}
	The uniform membership problem for string-generating HRGs over a one-letter alphabet $T=\{a\}$ is in P. Indeed, given a string-generating HRG $\Gamma$ and a word $w \in \{a\}^\ast$, checking whether $\SG(w) \in \Gamma$ is equivalent to checking whether $w \in \Language(G)$ for any linearisation $G$ of $\Gamma$. Indeed, if $w = a^n$ is generated by $G$, then one can transform a derivation of $w$ in $G$ into the corresponding one in $\Gamma$ and obtain a derivation of some hypergraph $H$ which has $n$ $a$-labeled edges. However, since $\Gamma$ is string-generating, $H$ must equal $\SG(u)$ for some $u$; hence, clearly, $u=a^n$. The promise that $\Gamma$ is string-generating is crucial here as it guarantees uniqueness of a hypergraph with $n$ $a$-labeled edges in $\Language(\Gamma)$.
\end{remark}

Since checking whether the empty string $\varepsilon$ is accepted by a DTWT/MCFG is already \EXPTIME{}-complete \cite{BjorklundE13,KajiNSK94}, there is no polynomial-time transformation of a DTWT/MCFG into an equivalent string-generating HRG. However, in what follows, we shall use MCFG, so let us recall its definition \cite{Kallmeyer10}.

\begin{definition}[Mcf-function]
	An \emph{mcf-function} is a function $f:(T^\ast)^{d_1} \times \ldots \times (T^\ast)^{d_k} \to (T^\ast)^{d_0}$ such that 
	\begin{equation}\label{equation:mcf-function}
		f((x_{11},\ldots,x_{1d_1}),\ldots,(x_{k1},\ldots,x_{kd_k}))
		=
		(y_{11}\ldots y_{1l_1},\ldots,y_{d_01}\ldots y_{d_0l_{d_0}})
	\end{equation}
	where each $y_{ij}$ (for $i=1,\ldots,d_0$ and $j=1,\ldots,l_{i}$) is either a terminal symbol from $T$ or it is one of the variables $x_{pq}$; besides, it is required that each variable $x_{pq}$ occurs \emph{at most once} among $y_{ij}$.
	
	Let us call a variable $x_{pq}$ \emph{unused in $f$} if it does not occur among $y_{ij}$. The \emph{information-lossless condition} is the property that each variable $x_{pq}$ occurs \emph{exactly once} among $y_{ij}$, i.e.~there are no variables unused in $f$.
\end{definition}

\begin{definition}
	A mcf-function is called \emph{$\varepsilon$-free} if no component in the right-hand side of \cref{equation:mcf-function} is the empty string, i.e.~$l_i>0$ for each $i=1,\ldots,d_0$. A mcf-function is called \emph{interleaving} if no component in the right-hand side of \cref{equation:mcf-function} contains a substring $x_{pq}x_{pr}$ for some $1 \le p \le k$ and some $1 \le q,r \le d_p$, $q \ne r$.
\end{definition}

\begin{definition}[Multiple context-free grammar]
	A \emph{multiple context-free grammar (MCFG)} is a tuple $G = \langle N,T,F,P,S\rangle$ where
	\begin{enumerate}
		\item $N$ is the set of nonterminals equipped with a function $\dim:N \to \mathbb{N}$;
		\item $T$ is the set of terminals;
		\item $F$ is a finite set of mcf-functions;
		\item $P$ is a finite set of rules of the form $A_0 \to f[A_1,\ldots,A_k]$ where $f:(T^\ast)^{d_1} \times \ldots (T^\ast)^{d_k} \to (T^\ast)^{d_0}$ is a mcf-function from $F$ and $A_i \in N$ with $\dim(A_i) = d_i$ for $i=0,\ldots,k$.
		\item $S \in N$ is the start symbol such that $\dim(S) = 1$. 
	\end{enumerate}
	
	Let $(\mathrm{yield}_G(A))_{A\in N}$ be the least tuple of sets such that, if $A \to f[A_1,\ldots,A_k]$ belongs to $P$ and $\tau_i \in \mathrm{yield}_G(A_i)$ for $i=1,\ldots,k$, then $f(\tau_1,\ldots,\tau_k) \in \mathrm{yield}_G(A)$. (In particular, for $k=0$, if $A \to f[]$ is a production with the right-hand side consisting of terminal symbols only, then $f() \in \mathrm{yield}_G(A)$.)
	The language generated by $G$ is $\mathrm{yield}_G(S)$.
\end{definition}

\begin{definition}
	A \emph{linear context-free rewriting system (LCFRS)} is a MCFG such that all mcf-functions used in it satisfy the information-lossless condition.
\end{definition}

\begin{definition}
	A LCFRS is \emph{repetition-free} if all mcf-functions participating in it are $\varepsilon$-free and interleaving.
\end{definition}

\begin{example}\label{example:MCFG-LCFRS}
	Let $G_1$ be a MCFG with nonterminals $S,A,B$ where $\dim(S) = 1$, $\dim(A) = 3$, $\dim(B) = 2$, and the productions are as follows:
	\begin{enumerate}
		\item $S \to f_1[A,B]$ where $f_1((x_{11},x_{12},x_{13}),(x_{21},x_{22})) = x_{11}x_{21}x_{12}x_{22}x_{13}$;
		\item $A \to f_2[A]$ where $f_2((x_{11},x_{12},x_{13})) = (ax_{11},bx_{12},cx_{13})$;
		\item $A \to f_3[]$ where $f_3() = (\varepsilon,\varepsilon,\varepsilon)$;
		\item $B \to f_4[A]$ where $f_4((x_{11},x_{12},x_{13})) = (dx_{11}d,dx_{12}d)$.
	\end{enumerate}
	This grammar generates the language $\{a^mda^ndb^mdb^ndc^m \mid m,n \in \Nat\}$. It is not a LCFRS because the last rule violates the information-lossless condition (there is no $x_{13}$ in the right-hand side). 
	
	Let $G_2$ be a MCFG in which the last rule is replaced with the rule
	$B \to f_5[A]$ where $f_5((x_{11},x_{12},x_{13})) = (dx_{11}d,dx_{12}x_{13}d)$. This is a LCFRS generating the language $\{a^mda^ndb^mdb^nc^ndc^m \mid m,n \in \Nat\}$.
	
	The functions $f_1,f_2,f_4,f_5$ are $\varepsilon$-free; the functions $f_1,f_2,f_3,f_4$ are interleaving. $f_5$ is not interleaving because it contains the substring $x_{12}x_{13}$.
\end{example}

It is straightfroward to prove the following proposition.
\begin{proposition}\label{proposition:lcfrs-to-hrg}
	There is a polynomial-time transformation of a LCFRS $G$ into a HRG $\Gamma$ such that $\Language(\Gamma) = \SG(\Language(G))$. Besides, if $G$ is repetition-free, then $\Gamma$ is repetition-free.
\end{proposition}
Let us show an example that explains how to construct $\Gamma$ from $G$.
\begin{example}
	The LCFRS $G_2$ from \cref{example:MCFG-LCFRS} is transformed into the HRG $\Gamma$ with nonterminals $S,A,B$ ($\type(S)=[2]$, $\type(A) = [6]$, $\type(B) = [4]$) that has the following productions:
	\begin{enumerate}
		\item 
		$
		S \;\to\;
		\vcenter{\hbox{{\tikz[baseline=.1ex]{
						\node (C) {};
						\node[node,left=2mm of C] (V3) {};
						\node[node,left=6mm of V3] (V2) {};
						\node[node,left=6mm of V2,label=left:{\tiny $(1)$}] (V1) {};
						\node[node,right=2mm of C] (V4) {};
						\node[node,right=6mm of V4] (V5) {};
						\node[node,right=6mm of V5,label=right:{\tiny $(2)$}] (V6) {};
						\node[hyperedge,above=3mm of C] (E1) {$A$};
						\node[hyperedge,below=3mm of C] (E2) {$B$};
						\draw[-] (V1) to[bend left=15] node[above left] {\tiny $1$} (E1);
						\draw[-] (V2) to[bend left=15] node[left] {\tiny $2$} (E1);
						\draw[-] (V3) to[bend left=15] node[left] {\tiny $3$} (E1);
						\draw[-] (V4) to[bend right=15] node[right] {\tiny $4$} (E1);
						\draw[-] (V5) to[bend right=15] node[right] {\tiny $5$} (E1);
						\draw[-] (V6) to[bend right=15] node[above right] {\tiny $6$} (E1);
						\draw[-] (V2) to[bend right=15] node[left] {\tiny $1$} (E2);
						\draw[-] (V3) to[bend right=15] node[left] {\tiny $2$} (E2);
						\draw[-] (V4) to[bend left=15] node[right] {\tiny $3$} (E2);
						\draw[-] (V5) to[bend left=15] node[right] {\tiny $4$} (E2);
		}}}}
		$
		\item
		$
		A \;\to\;
		\vcenter{\hbox{{\tikz[baseline=.1ex]{
						\def\STEP{6mm}
						\def\SHIFT{8mm}
						\node[hyperedge] (EA) at ($(\SHIFT+\STEP*5.5,0.7)$) {$A$};
						\foreach \i in {1,...,9}
						{
							\pgfmathtruncatemacro\MOD{mod(\i,3)}
							\pgfmathtruncatemacro\DIV{div(\i-1,3)}
							\pgfmathtruncatemacro\NUMLEFT{1+(2*div(\i,3))}
							\pgfmathtruncatemacro\NUMRIGHT{2*div(\i,3)}
							\pgfmathtruncatemacro\BEND{7*pow((\i-5.5)/2,2)+15}
							\node[node,
							label=
							\ifnumgreater{\MOD}{0}{left}{right}:
							{\tiny
								\ifnumequal{\MOD}{1}{$(\NUMLEFT)$}{\ifnumequal{\MOD}{0}{$(\NUMRIGHT)$}{}}
							}
							] (V\i) at ($(\STEP*\i+\DIV*\SHIFT,0)$) {};
							\ifnumequal{\MOD}{2}
							{
								\draw[-] (EA) to[bend \ifnumgreater{\i}{5}{left}{right}=\BEND] node[\ifnumequal{\NUMLEFT}{3}{left}{above}] {\tiny $\NUMLEFT$} (V\i);
							}{}
							\ifnumequal{\MOD}{0}
							{
								\draw[-] (EA) to[bend \ifnumgreater{\i}{5}{left}{right}=\BEND] node[\ifnumequal{\NUMRIGHT}{4}{right}{above}] {\tiny $\NUMRIGHT$} (V\i);
							}{}
						}
						\draw[thick,-latex] (V1) -- node[below] {$a$} (V2);
						\draw[thick,-latex] (V4) -- node[below] {$b$} (V5);
						\draw[thick,-latex] (V7) -- node[below] {$c$} (V8);
		}}}}
		$
		\item\label{item:3}
		$
		A \;\to\;
		\vcenter{\hbox{{\tikz[baseline=.1ex]{
						\def\STEP{12mm}
						\foreach \i in {1,...,3}
						{
							\pgfmathtruncatemacro\LEFT{2*\i-1}
							\pgfmathtruncatemacro\RIGHT{2*\i}
							\node[node,
							label=left:{\tiny $(\LEFT)$},
							label=right:{\tiny $(\RIGHT)$}
							]
							(V\i) at ($(\STEP*\i,0)$) {};
						}
		}}}}
		$
		\item\label{item:4}
		$
		B \;\to\;
		\vcenter{\hbox{{\tikz[baseline=.1ex]{
						\def\STEP{6mm}
						\def\SHIFT{8mm}
						\node[hyperedge] (EA) at ($(\SHIFT*0.5+\STEP*4.5,0.85)$) {$A$};
						\foreach \i in {1,...,4}
						{
							\node[node,
							label=
							\ifnumgreater{\i}{1}{right}{left}:
							{\tiny
								\ifnumequal{\i}{1}{$(1)$}{\ifnumequal{\i}{4}{$(2)$}{}}
							}
							] (V\i) at ($(\STEP*\i,0)$) {};
						}
						\foreach \i in {5,...,9}
						{
							\node[node,
							label=
							\ifnumgreater{\i}{5}{right}{left}:
							{\tiny
								\ifnumequal{\i}{5}{$(3)$}{\ifnumequal{\i}{9}{$(4)$}{}}
							}
							] (V\i) at ($(\STEP*\i+\SHIFT,0)$) {};
						}
						\draw[thick,-latex] (V1) -- node[below] {$d$} (V2);
						\draw[thick,-latex] (V3) -- node[below] {$d$} (V4);
						\draw[thick,-latex] (V5) -- node[below] {$d$} (V6);
						\draw[thick,-latex] (V8) -- node[below] {$d$} (V9);
						\draw[-] (EA) to[bend right=30] node[above] {\tiny $1$} (V2);
						\draw[-] (EA) to[bend right=10] node[above] {\tiny $2$} (V3);
						\draw[-] (EA) to[bend left=0] node[above] {\tiny $3$} (V6);
						\draw[-] (EA) to[bend left=10] node[above] {\tiny $4$} (V7);
						\draw[-] (EA) to[bend left=35] node[above] {\tiny $5$} (V7);
						\draw[-] (EA) to[bend left=45] node[above] {\tiny $6$} (V8);
		}}}}
		$
	\end{enumerate}
	Since $f_3$ has three $\varepsilon$'s in the right-hand side, the hypergraph from \cref{item:3} has three pairs of selectors such that, in each pair, selectors are mapped to the same node. The resulting hypergraph is not repetition-free. Since $f_4$ has the substring $x_{12}x_{13}$, some attachment nodes of the $A$-labeled hyperedge of the hypergraph from \cref{item:4} coincide. But, if a mcf-function is interleaving and $\varepsilon$-free, then it is transformed into a repetition-free hypergraph.
\end{example}

Since the uniform membership for LCFRSs is \PSPACE{}-complete, \cref{proposition:lcfrs-to-hrg} implies that the uniform membership for string-generating HRGs is \PSPACE{}-hard\footnote{Note that we proved \PSPACE{}-hardness for general HRGs in \cref{section:chain} using different ideas, namely, via transformation monoids.}. However, we cannot use a similar transformation to convert an arbitrary MCFG into a HRG because HRGs are information-lossless: it is not possible to apply a rule $A \to H$ to a hyperedge $e$ only partially. 

Despite this fact, there is an indirect way of using the \EXPTIME{}-hardness result for MCFGs to prove \EXPTIME{}-hardness of the uniform membership for string-generating HRGs. First, we need the following proposition.

\begin{proposition}\label{proposition:EXPTIME-LCFRS}
	The following problem is \EXPTIME{}-hard:
	\begin{quote}
		Given a LCFRS $G$ over the alphabet $\{a,b\}$ such that $\varepsilon \notin \Language(G)$, check whether the intersection $\Language(G) \cap b\{a,b\}^\ast$ is non-empty.
	\end{quote}
\end{proposition}
\begin{proof}
	In \cite[Appendix A.1]{KajiNSK94}, it is proved that the uniform membership for MCFGs is \EXPTIME{}-hard. A careful analysis of the proof shows that, in fact, the authors prove a stronger result, namely, \EXPTIME{}-hardness of the following problem:
	\begin{quote}
		Given a MCFG $G$ over the alphabet $\{a\}$, check if $\varepsilon \in \Language(G)$.
	\end{quote}
	We reduce the latter problem to the one defined in the proposition. Let $G = \langle N,\{a\},F,P,S\rangle$ be a MCFG. We construct a LCFRS $\tilde{G} = \langle \tilde{N},\{a,b\},\tilde{F},\tilde{P},\tilde{S}\rangle$:
	\begin{itemize}
		\item $\tilde{N} \eqdef N \cup \{\tilde{S}\}$ with the new dimension function $\widetilde{\dim}(A) \eqdef \dim(A)+1$ for $A \in N$ and $\widetilde{\dim}(\tilde{S}) \eqdef 1$.
		\item Let $f:(T^\ast)^{d_1} \times \ldots (T^\ast)^{d_k} \to (T^\ast)^{d_0}$ be a mcf-function from $F$. We define a mcf-function $\tilde{f}:(T^\ast)^{d_1+1} \times \ldots (T^\ast)^{d_k+1} \to (T^\ast)^{d_0+1}$ as follows. For $i=1,\ldots,d_0$, the $i$-th component of 
		$$
		\tilde{f}\left(\left(x_{11},\ldots,x_{1d_1},x_{1(d_1+1)}\right),\ldots,\left(x_{k1},\ldots,x_{kd_k},x_{k(d_k+1)}\right)\right)
		$$
		is the same as that of $f(\left(x_{11},\ldots,x_{1d_1}\right),\ldots,\left(x_{k1},\ldots,x_{kd_k}\right))$, and its $(d_0+1)$-st component equals
		$
		x_{1(d_1+1)}x_{2(d_2+1)}\ldots x_{k(d_k+1)} z_1 \ldots z_l
		$
		where $z_1,\ldots,z_l$ are all the variables unused in $f$. (If $k=0$, then this component equals $\varepsilon$.) Clearly, the function $\tilde{f}$ satisfies the information-lossless condition.
		\\
		Besides, let $g:(T^\ast)^2 \to T^\ast$ be a mcf-function defined as follows: \[g(x_1,x_2) = x_1bx_2.\]
		
		\item $\tilde{P} \eqdef \{A \to \tilde{f}[A_1,\ldots,A_k] \mid (A \to f[A_1,\ldots,A_k]) \in P\} \cup \{\tilde{S} \to g[S]\}$.
	\end{itemize}
	It is not hard to see that, for each $A \in N$, $\mathrm{yield}_{\tilde{G}}(A)$ without the last component equals $\mathrm{yield}_G(A)$. In particular, the first components of tuples from $\mathrm{yield}_{\tilde{G}}(S)$ form exactly $\Language(G)$. Therefore, 
	\begin{align*}
		\varepsilon \in \Language(G)
		& \Longleftrightarrow \\
		(\varepsilon,w) \in \mathrm{yield}_{\tilde{G}}(S)
		~\text{for some $w \in \{a,b\}^\ast$}
		& \Longleftrightarrow \\
		g(\varepsilon,w) = bw \in \Language(\tilde{G})
		~\text{for some $w \in \{a,b\}^\ast$.}
	\end{align*}
	Finally, note that $\varepsilon \notin \Language(G)$ because the rule $\tilde{S} \to g[S]$, which must be applied in any derivation starting with $\tilde S$, introduces the symbol $b$.
\end{proof}

The second ingredient in the proof of \EXPTIME{}-hardness we shall present soon is using generalised string graphs.
\begin{definition}[Generalised string graph]
	\label{definition:generalized-string-graph}
	For a non-empty finite selector set $\UniSetSelsB$, a $\UniSetSelsB$-string hypergraph $\SG_{\UniSetSelsB}(w)$ induced by a string $w = a_1 \ldots a_n$ is defined as follows: 
	\begin{itemize}
		\item $V_{\SG_{\UniSetSelsB}(w)} = \{v_{i,\UniSelB} \mid i=0,\ldots,n,\, \UniSelB \in \UniSetSelsB \}$; 
		\item $E_{\SG_{\UniSetSelsB}(w)} = \{s_1,\dotsc,s_n\}$; $\type(s_i) = \type(\SG_{\UniSetSelsB}(w)) = \Theta \times \{1,2\}$;
		\item $\att_{\SG_{\UniSetSelsB}(w)}(s_i)(\UniSelB,1)=v_{i-1,\UniSelB}$ and $\att_{\SG_{\UniSetSelsB}(w)}(s_i)(\UniSelB,2)=v_{i,\UniSelB}$ for $i=1,\ldots,n$ and $\UniSelB \in \UniSetSelsB$;
		\item $\lab_{\SG_{\UniSetSelsB}(w)}(s_i)=a_i$ for $i = 1, \ldots, n$; 
		\item $\ext_{\SG_{\UniSetSelsB}(w)}(\UniSelB,1)=v_{0,\UniSelB}$, $\ext_{\SG_{\UniSetSelsB}(w)}(\UniSelB,2)=v_{n,\UniSelB}$. 
	\end{itemize}
\end{definition}
If $\Theta = [q]$ for some $q \in \Nat \setminus \{0\}$, we identify the selector $(i,1)$ with $i$ and the selector $(i,2)$ with $i+q$.
\begin{example}
	Below, an example of a $[6]$-string hypergraph is presented.
	$$
	\SG_{[6]}(aba) \; = \; 
	\vcenter{\hbox{{\tikz[baseline=.1ex]{
					\def\HOR{0.45}
					\def\VER{2.1}
					\def\BEND{30}
					\node[hyperedge] (E1) at ($(\VER*0.5,-\HOR*2.5)$) {$a$};
					\node[hyperedge] (E2) at ($(\VER*1.5,-\HOR*2.5)$) {$b$};
					\node[hyperedge] (E3) at ($(\VER*2.5,-\HOR*2.5)$) {$a$};
					\foreach \i in {0,...,3}
					{
						\foreach \j in {0,...,5}
						{
							\pgfmathtruncatemacro\J{\j+1}
							\pgfmathtruncatemacro\JJ{\j+7}
							\node[node, label=\ifnumgreater{\i}{0}{right}{left}:{\tiny \ifnumequal{\i}{0}{$(\J)$}{\ifnumequal{\i}{3}{$(\JJ)$}{}}}] (V\i\j) at ($(\VER*\i,-\HOR*\j)$) {};
						}			
					}
					\foreach \i in {0,...,3}
					{
						\foreach \j in {0,...,5}
						{
							\pgfmathtruncatemacro\I{\i+1}
							\ifnumgreater{3}{\i}{
								\pgfmathtruncatemacro\J{\j+1}
								\draw[-] (E\I) to[bend \ifnumgreater{\j}{2}{left}{right}=\BEND] node[\ifnumgreater{\j}{2}{below}{above}] {\tiny $\J$} (V\i\j);
							}
							{}
							\ifnumgreater{\i}{0}{
								\pgfmathtruncatemacro\JJ{\j+7}
								\draw[-] (E\i) to[bend \ifnumgreater{\j}{2}{right}{left}=\BEND] node[\ifnumgreater{\j}{2}{below}{above}] {\tiny $\JJ$} (V\i\j);
							}
							{}
						}
					}
	}}}}
	$$
\end{example}
Clearly, $[1]$-string hypergraphs are exactly string graphs. It is also clear that one can generalise \cref{proposition:lcfrs-to-hrg} to $\Theta$-string hypergraphs as follows.
\begin{proposition}\label{proposition:lcfrs-to-hrg-generalised}
	There is a polynomial transformation of a LCFRS $G$ into a HRG $\Gamma$ such that $\Language(\Gamma) = \SG_\UniSetSelsB(\Language(G))$. 
\end{proposition}
It is proved in the same way as \cref{proposition:lcfrs-to-hrg} with the only difference that each hyperedge has $|\UniSetSelsB|$ times more attachment nodes.

We are ready to prove the \EXPTIME{} lower bound.

\begin{proof}[Proof of \cref{theorem:lower_bound_EXPTIME}]
	We are going to reduce the \EXPTIME{}-hard problem from \cref{proposition:EXPTIME-LCFRS} to the uniform membership problem for string-generating HRGs. Let $G$ be a LCFRS over the terminal alphabet $\{a,b\}$ not generating the empty word; our task to check whether there is a word starting with $b$ in $G$.
	
	Construct a HRG $\Gamma = \langle N,\{a,b\},\Sigma,P,S\rangle$ such that $\Language(\Gamma) = \SG_{[6]}(\Language(G))$ (\cref{proposition:lcfrs-to-hrg-generalised}) and define $\Gamma^\prime = \langle N^\prime, \{x,y\}, \Sigma, P^\prime, S^\prime \rangle$ where $N^\prime = N \cup \{a,b,S^\prime\}$ and $P^\prime = P \cup \{a \to H_a, b \to H_b, S^\prime \to H_0\}$. Hypergraphs $H_a,H_b,H_0$ are defined below.
	\begin{itemize}
		\item $V_{H_a} = \{v_1,v_2,v_3\}$, $E_{H_a} = \emptyset$, $\type(H_a)=[12]$, $\ext_{H_a}(i) = v_1$ for $i=1,2$, $\ext_{H_a}(i) = v_2$ for $i=3,4$, $\ext_{H_a}(i) = v_3$ for $i=5,\ldots,12$;
		\item $V_{H_b} = \{v_1,v_2,v_3\}$, $E_{H_b} = \emptyset$, $\type(H_b)=[12]$, $\ext_{H_b}(i) = v_1$ for $i=1,4$, $\ext_{H_b}(i) = v_2$ for $i=2,5$, $\ext_{H_b}(i) = v_3$ for $i=3,6$ as well as for $i=7,\ldots,12$.
		\item 
		$
		H_0 \;=\;
		\vcenter{\hbox{{\tikz[baseline=.1ex]{
						\def\HOR{0.45}
						\def\VER{2.1}
						\def\BEND{30}
						\node[hyperedge] (E1) at ($(\VER*0.5,-\HOR*2.5)$) {$S$};
						\foreach \i in {0,1}
						{
							\foreach \j in {0,...,5}
							{
								\pgfmathtruncatemacro\J{\j+1}
								\pgfmathtruncatemacro\JJ{\j+7}
								\node[node, label=left:{\tiny
									\ifnumequal{\i}{0}{\ifnumequal{\j}{0}{$(1)$}{\ifnumequal{\j}{5}{$(2)$}{}}}{}
								}] (V\i\j) at ($(\VER*\i,-\HOR*\j)$) {};
							}			
						}
						\foreach \i in {0,1}
						{
							\foreach \j in {0,...,5}
							{
								\pgfmathtruncatemacro\I{\i+1}
								\ifnumgreater{1}{\i}{
									\pgfmathtruncatemacro\J{\j+1}
									\draw[-] (E\I) to[bend \ifnumgreater{\j}{2}{left}{right}=\BEND] node[\ifnumgreater{\j}{2}{below}{above}] {\tiny $\J$} (V\i\j);
								}
								{}
								\ifnumgreater{\i}{0}{
									\pgfmathtruncatemacro\JJ{\j+7}
									\draw[-] (E\i) to[bend \ifnumgreater{\j}{2}{right}{left}=\BEND] node[\ifnumgreater{\j}{2}{below}{above}] {\tiny $\JJ$} (V\i\j);
								}
								{}
							}
						}
						\draw[thick,-latex] (V01) to[bend right=0] node[left] {$x$} (V02);
						\draw[thick,-latex] (V03) to[bend right=0] node[left] {$y$} (V04);
		}}}}
		$
	\end{itemize}
	Productions $a \to H_a$ and $b \to H_b$ are empty. For example, when $a \to H_a$ is applied to a hyperedge $e$ in a hypergraph $H$, it glues $\att_H(e)(1)$ with $\att_H(e)(2)$, $\att_H(e)(3)$ with $\att_H(e)(4)$ and, for $i=5,\ldots,12$, it glues the nodes $\att_H(e)(i)$. Similarly, if $\lab_H(e)=b$, then $b \to H_b$, being applied to $e$, glues $\att_H(e)(1)$ with $\att_H(e)(4)$, $\att_H(e)(2)$ with $\att_H(e)(5)$ and $\att_H(e)(3)$ with $\att_H(e)(i)$ for $i=6,\ldots,12$.
	
	\begin{example}\label{example:aba}
		Assume that $aba \in \Language(G)$; then, $S^\bullet \Rightarrow^\ast \SG_{[6]}(aba)$ in $\Gamma$. The following is then a derivation in $\Gamma^\prime$:
		$$
		(S^\prime)^\bullet
		\Rightarrow
		H_0
		\Rightarrow^{\ast}
		\vcenter{\hbox{{\tikz[baseline=.1ex]{
						\def\HOR{0.45}
						\def\VER{2.1}
						\def\BEND{30}
						\node[hyperedge] (E1) at ($(\VER*0.5,-\HOR*2.5)$) {$a$};
						\node[hyperedge] (E2) at ($(\VER*1.5,-\HOR*2.5)$) {$b$};
						\node[hyperedge] (E3) at ($(\VER*2.5,-\HOR*2.5)$) {$a$};
						\foreach \i in {0,...,3}
						{
							\foreach \j in {0,...,5}
							{
								\pgfmathtruncatemacro\J{\j+1}				\pgfmathtruncatemacro\JJ{\j+7}
								\node[node, label=left:{\tiny
									\ifnumequal{\i}{0}{\ifnumequal{\j}{0}{$(1)$}{\ifnumequal{\j}{5}{$(2)$}{}}}{}
								}] (V\i\j) at ($(\VER*\i,-\HOR*\j)$) {};
							}			
						}
						\foreach \i in {0,...,3}
						{
							\foreach \j in {0,...,5}
							{
								\pgfmathtruncatemacro\I{\i+1}
								\ifnumgreater{3}{\i}{
									\pgfmathtruncatemacro\J{\j+1}
									\draw[-] (E\I) to[bend \ifnumgreater{\j}{2}{left}{right}=\BEND] node[\ifnumgreater{\j}{2}{below}{above}] {\tiny $\J$} (V\i\j);
								}
								{}
								\ifnumgreater{\i}{0}{
									\pgfmathtruncatemacro\JJ{\j+7}
									\draw[-] (E\i) to[bend \ifnumgreater{\j}{2}{right}{left}=\BEND] node[\ifnumgreater{\j}{2}{below}{above}] {\tiny $\JJ$} (V\i\j);
								}
								{}
							}
						}
						\draw[thick,-latex] (V01) to[bend right=0] node[left] {$x$} (V02);
						\draw[thick,-latex] (V03) to[bend right=0] node[left] {$y$} (V04);
		}}}}
		\Rightarrow
		$$
		$$
		\vcenter{\hbox{{\tikz[baseline=.1ex]{
						\def\HOR{0.45}
						\def\VER{2.1}
						\def\SHIFT{0.6}
						\def\BEND{30}
						\node[node, label=left:{\tiny $(1)$}] (V1) at ($(\VER*0,-\HOR*0)$) {};
						\node[node] (V2) at ($(\VER*0,-\HOR*1.25)$) {};
						\node[node, label=left:{\tiny $(2)$}] (V3) at ($(\VER*0,-\HOR*2.5)$) {};
						\node[hyperedge] (E1) at ($(\SHIFT+\VER*0.5,-\HOR*2.5)$) {$b$};
						\node[hyperedge] (E2) at ($(\SHIFT+\VER*1.5,-\HOR*2.5)$) {$a$};
						\foreach \i in {1,2}
						{
							\foreach \j in {0,...,5}
							{
								\pgfmathtruncatemacro\J{\j+1}
								\pgfmathtruncatemacro\JJ{\j+7}
								\node[node, label=left:{\tiny
									\ifnumequal{\i}{0}{\ifnumequal{\j}{0}{$(1)$}{\ifnumequal{\j}{5}{$(2)$}{}}}{}
								}] (V\i\j) at ($(\SHIFT+\VER*\i,-\HOR*\j)$) {};
							}			
						}
						\foreach \i in {1,2}
						{
							\foreach \j in {0,...,5}
							{
								\pgfmathtruncatemacro\I{\i+1}
								\ifnumgreater{2}{\i}{
									\pgfmathtruncatemacro\J{\j+1}
									\draw[-] (E\I) to[bend \ifnumgreater{\j}{2}{left}{right}=\BEND] node[\ifnumgreater{\j}{2}{below}{above}] {\tiny $\J$} (V\i\j);
								}
								{}
								\ifnumgreater{\i}{0}{
									\pgfmathtruncatemacro\JJ{\j+7}
									\draw[-] (E\i) to[bend \ifnumgreater{\j}{2}{right}{left}=\BEND] node[\ifnumgreater{\j}{2}{below}{above}] {\tiny $\JJ$} (V\i\j);
								}
								{}
							}
						}
						\draw[thick,-latex] (V1) to[bend right=0] node[left] {$x$} (V2);
						\draw[thick,-latex] (V2) to[bend right=0] node[left] {$y$} (V3);
						\draw[-] (E1) to[bend right=72] node[above] {\tiny $1$} (V3);
						\draw[-] (E1) to[bend right=32] node[above] {\tiny $2$} (V3);
						\draw[-] (E1) to[bend right=5] node[above] {\tiny $3$} (V3);
						\draw[-] (E1) to[bend left=5] node[below] {\tiny $4$} (V3);
						\draw[-] (E1) to[bend left=32] node[below] {\tiny $5$} (V3);
						\draw[-] (E1) to[bend left=72] node[below] {\tiny $6$} (V3);
		}}}}
		\Rightarrow
		\vcenter{\hbox{{\tikz[baseline=.1ex]{
						\def\HOR{0.45}
						\def\VER{2.1}
						\def\SHIFT{0.6}
						\def\BEND{30}
						\node[node, label=left:{\tiny $(1)$}] (V1) at ($(\VER*0,-\HOR*0)$) {};
						\node[node] (V2) at ($(\VER*0,-\HOR*1.25)$) {};
						\node[node, label=left:{\tiny $(2)$}] (V3) at ($(\VER*0,-\HOR*2.5)$) {};
						\node[hyperedge] (E1) at ($(\SHIFT+\VER*0.5,-\HOR*2.5)$) {$a$};
						\foreach \i in {1}
						{
							\foreach \j in {0,...,5}
							{
								\pgfmathtruncatemacro\J{\j+1}
								\pgfmathtruncatemacro\JJ{\j+7}
								\node[node, label=left:{\tiny
									\ifnumequal{\i}{0}{\ifnumequal{\j}{0}{$(1)$}{\ifnumequal{\j}{5}{$(2)$}{}}}{}
								}] (V\i\j) at ($(\SHIFT+\VER*\i,-\HOR*\j)$) {};
							}			
						}
						\foreach \i in {1}
						{
							\foreach \j in {0,...,5}
							{
								\pgfmathtruncatemacro\I{\i+1}
								\ifnumgreater{1}{\i}{
									\pgfmathtruncatemacro\J{\j+1}
									\draw[-] (E\I) to[bend \ifnumgreater{\j}{2}{left}{right}=\BEND] node[\ifnumgreater{\j}{2}{below}{above}] {\tiny $\J$} (V\i\j);
								}
								{}
								\ifnumgreater{\i}{0}{
									\pgfmathtruncatemacro\JJ{\j+7}
									\draw[-] (E\i) to[bend \ifnumgreater{\j}{2}{right}{left}=\BEND] node[\ifnumgreater{\j}{2}{below}{above}] {\tiny $\JJ$} (V\i\j);
								}
								{}
							}
						}
						\draw[thick,-latex] (V1) to[bend right=0] node[left] {$x$} (V2);
						\draw[thick,-latex] (V2) to[bend right=0] node[left] {$y$} (V3);
						\draw[-] (E1) to[bend right=72] node[above] {\tiny $1$} (V3);
						\draw[-] (E1) to[bend right=32] node[above] {\tiny $2$} (V3);
						\draw[-] (E1) to[bend right=5] node[above] {\tiny $3$} (V3);
						\draw[-] (E1) to[bend left=5] node[below] {\tiny $4$} (V3);
						\draw[-] (E1) to[bend left=32] node[below] {\tiny $5$} (V3);
						\draw[-] (E1) to[bend left=72] node[below] {\tiny $6$} (V3);
		}}}}
		\Rightarrow
		\vcenter{\hbox{{\tikz[baseline=.1ex]{
						\def\HOR{0.45}
						\def\VER{2.1}
						\def\SHIFT{1}
						\def\BEND{30}
						\node[node, label=above:{\tiny $(1)$}] (V1) at ($(\VER*0,-\HOR*0)$) {};
						\node[node] (V2) at ($(\VER*0,-\HOR*1.25)$) {};
						\node[node, label=below:{\tiny $(2)$}] (V3) at ($(\VER*0,-\HOR*2.5)$) {};
						\draw[thick,-latex] (V1) to[bend right=0] node[left] {$x$} (V2);
						\draw[thick,-latex] (V2) to[bend right=0] node[left] {$y$} (V3);
		}}}}
		$$
		The application of $a \to H_a$ to the leftmost $a$-labeled hyperedge ``assembles'' the string graph $\SG(xy)$. After the application of this production, attachment nodes of the next hyperedge with number from 1 to 6 are glued together; therefore, the application of $b \to H_b$ simply glues all the attachment nodes of the $b$-labeled hyperedge without altering the form of the resulting string graph.
		
		For the sake of comparison, suppose that $bba \in \Language(G)$. Then, we can construct a similar derivation with the only difference that $b \to H_b$ is applied instead of $a \to H_a$ at the antepenultimate step. Then, the first external node is glued with the first attachment node of the $y$-labeled edge, the second attachment node of the latter is glued with the first attachment node of the $x$-labeled edge, and the second attachment node of the latter is glued with the second external node. Thus, the resulting hypergraph is $\SG(yx)$ instead of $\SG(xy)$.
		
	\end{example}
	The above explanation underlies the following lemma.
	\begin{lemma}\label{lemma:Gammaprime}
		$\Language(\Gamma^\prime) \subseteq \{\SG(xy),\SG(yx)\}$, so $\Gamma^\prime$ is string-generating. Moreover, $\SG(yx) \in \Language(\Gamma^\prime)$ if and only if there is a word starting with $b$ in $\Language(G)$.
	\end{lemma}
	This concludes the proof.
\end{proof}

The above proof shows that there is an efficient way to construct a HRG from a LCFRS $G$ that generates one hypergraph if $\Language(G)$ contains a string starting with $b$, and it generates a different hypergraph otherwise. This technique can be applied to show \EXPTIME{}-hardness of a wide variety of decision problems for HRGs. We delay this to the next section. Now, it remains to consider repetition-free HRGs, for which we have already proved the \NP{} upper bound. Again, the lower bound is established for string-generating HRGs.
\begin{lemma}\label{lemma:repetition-free-NP-complete}
	The problem whether a given string-generating repetition-free HRG $\Gamma$ generates a given string graph $H$ is \NP{}-hard.
\end{lemma}
\begin{proof}
	We reduce the 3-exact cover problem: given a collection $\mathcal{C}$ of 3-element subsets of $[3q]$, check if there are pairwise disjoint sets $A_1,\ldots,A_q \in \mathcal{C}$ such that $A_1 \cup \ldots \cup A_q = [3q]$. First, we construct a LCFRS $\Gamma$ with nonterminal labels $S,S_1,\ldots,S_q$ such that $\dim(S) = 1$, $\dim(S_m) = 3m$ ($m=1,\ldots,q$), terminal labels $a_1,\ldots,a_{3q},b$ and the following mcf-functions:
	\begin{itemize}
		\item $f_0((x_1,\ldots,x_{3q})) = x_1 b x_2 b \ldots x_{3q} b$,
		\item $f_A^{i_1i_2i_3m}((x_1,\ldots,x_{3m-3})) = (y_1,\ldots, y_{3m})$ where $2 \le m \le q$, $1 \le i_1 < i_2 < i_3 \le 3m$, $A = \{j_1,j_2,j_3\} \in \mathcal{C}$ for $j_1 < j_2 < j_3$, $(y_{i_1},y_{i_2},y_{i_3}) = (a_{j_1},a_{j_2},a_{j_3})$ and $(y_{k_1},\ldots,y_{k_{3m-3}}) = (x_1,\ldots,x_{3m-3})$ where $k_1<\ldots<k_{3m-3}$ is the ordering of the set $[3m] \setminus \{i_1,i_2,i_3\}$,
		\item $f_A() = (a_{j_1},a_{j_2},a_{j_3})$ where $A = \{j_1,j_2,j_3\} \in \mathcal{C}$ for $j_1 < j_2 < j_3$, as above.
	\end{itemize}
	Productions of $G$ are 
	\begin{itemize}
		\item $S \to f_0[S_q]$,
		\item $S_m \to f^{i_1i_2i_3m}_A[S_{m-1}]$ for $2 \le m \le q$, $1 \le i_1 < i_2 < i_3 \le 3m$, and $A \in \mathcal{C}$,
		\item $S_1 \to f_A[]$ for $A \in \mathcal{C}$.
	\end{itemize}
	
	A straightforward induction on $m$ shows that $(w_1,\ldots,w_{3m}) \in \mathrm{yield}_G(S_m)$ iff $w_i \in \{a_1,\ldots,a_{3q}\}$ and $w_1 \ldots w_{3m} = u_1 \shuffle u_2 \shuffle \ldots \shuffle u_m$ for some $u_i \in \{a_{j_1}a_{j_2}a_{j_3} \mid j_1<j_2<j_3, \{j_1,j_2,j_3\} \in \mathcal{C}\}$. Here $L_1 \shuffle L_2 = \{u_1v_1\ldots u_nv_n \mid u_1\ldots u_n \in L_1, v_1 \ldots v_n \in L_2\}$ is the shuffle operation on formal languages. Consequently, $a_1 b \ldots a_{3q}b \in \Language(\Gamma) = \mathrm{yield}_G(S)$ if and only if there is an exact cover of $[3q]$ by sets from $\mathcal{C}$.
	
	Note that $G$ is repetition-free (its rules are $\varepsilon$-free and interleaving). \cref{proposition:lcfrs-to-hrg} transforms $G$ into a repetition-free HRG $\Gamma$ such that $\Language(\Gamma) = \SG(\Language(G))$, so solving the 3-exact cover problem is reduced to checking whether $\SG(a_1 b \ldots a_{3q}b) \in \Language(\Gamma)$.
\end{proof}

\section{Complexity of Decision Problems for HRGs}\label{section:complexity-decision-problems}

The proof of \EXPTIME{}-hardness of the uniform membership problem for string-generating HRGs turns out to be a very powerful tool which can be generalized in order to prove \EXPTIME{}-hardness of a wide variety of decision problems concerning HRGs. In this section, we delineate a class of decision problems to which this method is applicable.

Denote the set of languages generated by HRGs by $\HRL$. Given $H \in \mathcal{H}(C)$ with $C = \{c_1,\ldots,c_n\}$, define its \emph{Parikh vector} $\parikh(H) \eqdef (v_1,\ldots,v_n)$ where $v_i$ is the cardinality of the set $\{e \in E_H \mid \lab_H(e) = c_i\}$. Simply speaking, $\Psi$ counts the number of occurrences of each label in $H$. Using it, we can state the main result.

\begin{theorem}\label{theorem:main-meta}
	Let $\mathcal K \subseteq \HRL$ be some class of languages over a finite label alphabet $C$ and let $L \in \mathcal K$, $X \in L$, $Y \in \mathcal{H}(C)$ such that
	\begin{enumerate}
		\item $L \in \HRL \setminus \mathcal K$,
		\item $L \cup \{Y\} \in \mathcal K$,
		\item $\Psi(X) = \Psi(Y)$,
		\item each of $X$, $Y$ has at least one node.
	\end{enumerate}
	Then, checking whether $\Language(\Gamma) \in \mathcal K$ for a HRG $\Gamma$ is \EXPTIME{}-hard.
\end{theorem}

\begin{proof}[Proof of \cref{theorem:main-meta}]
	Take $L,X,Y$ from the theorem's statement. Since $L \cup \{Y\} \in \mathcal K$ is generated by some HRG, $\type(X) = \type(Y)$, because both $X$ and $Y$ are generated from the handle $S^\bullet$. Fix some node $v_X \in V_X$ and $v_Y \in V_Y$. For technical convenience, assume without loss of generality that $E_X = E_Y = E$ and that $\lab_X(e) = \lab_Y(e)$ for each $e \in E$. This can be achieved by renaming hyperedges because $\Psi(X) = \Psi(Y)$. Also, fix a HRG $\Gamma_0 = \langle N_0,T_0,\Sigma_0,P_0,S_0 \rangle$ such that $\Language(\Gamma_0) = L$.
	
	We reduce the \EXPTIME{}-complete problem from \cref{proposition:EXPTIME-LCFRS}. Similarly to the proof of \cref{theorem:lower_bound_EXPTIME}, we define hypergraphs $H_a$, $H_b$ and $H_0$. Let
	\begin{equation*}
		\UniSetSelsB \eqdef \big(\{0\} \times \type(X) \big) 
		\cup \big(\{1\}\times V_X\big) \cup \big(\{2\} \times V_Y\big)
		\cup \{(3,e,\UniSelA) \mid e \in E, \UniSelA \in \type_X(e)\}
	\end{equation*}
	
	\textbf{Definition of $H_a$:}
	\begin{itemize}
		\item $V_{H_a} = V_X$,
		\item $E_{H_a} = \emptyset$,
		\item $\type(H_a) = \Theta \times \{1,2\}$, 
		\begin{itemize}
			\item $\ext_{H_a}(0,\UniSelA,1) = \ext_X(\UniSelA)$ for $\UniSelA \in \type(X)$, 
			\item $\ext_{H_a}(1,v,1) = v$ for $v \in V_X$, 
			\item $\ext_{H_a}(2,v,1) = v_X$ for $v \in V_Y$,
			\item $\ext_{H_a}(3,e,\UniSelA,1) = \att_{X}(e)(\UniSelA)$ for $e \in E$ and $\UniSelA \in \type_X(e)$,
			\item $\ext_{H_a}(\UniSelB,2) = v_X$ for $\UniSelB \in \UniSetSelsB$.
		\end{itemize}
	\end{itemize}
	
	\textbf{Definition of $H_b$:}
	\begin{itemize}
		\item $V_{H_b} = V_Y$,
		\item $E_{H_b} = \emptyset$,
		\item $\type(H_b) = \Theta \times \{1,2\}$, 
		\begin{itemize}
			\item $\ext_{H_b}(0,\UniSelA,1) = \ext_Y(\UniSelA)$ for $\UniSelA \in \type(Y)$, 
			\item $\ext_{H_b}(1,v,1) = v_Y$ for $v \in V_X$, 
			\item $\ext_{H_b}(2,v,1) = v$ for $v \in V_Y$,
			\item $\ext_{H_b}(3,e,\UniSelA,1) = \att_{Y}(e)(\UniSelA)$ for $e \in E$ and $\UniSelA \in \type_Y(e)$,
			\item $\ext_{H_b}(\UniSelB,2) = v_Y$ for $\UniSelB \in \UniSetSelsB$.
		\end{itemize}
	\end{itemize}
	
	\textbf{Definition of $H_0$:}
	\begin{itemize}
		\item $V_{H_0} = \UniSetSelsB \times \{1,2\}$, 
		\item $E_{H_0} = E \cup \{e_0\}$, 
		\item $\type_{H_0}(e) = \type_X(e)$ for $e \in E$,
		\\
		$\type_{H_0}(e_0) = \Theta \times \{1,2\}$,
		\item $\att_{H_0}(e)(\UniSelA) = (3,e,\UniSelA,1)$ for $e \in E$ and $\UniSelA \in \type_X(e)$, 
		\\ $\att_{H_0}(e_0)(\UniSelB,i) = (\UniSelB,i)$ for $(\UniSelB,i) \in \UniSetSelsB \times \{1,2\}$, 
		\item $\lab_{H_0}(e) = \lab_X(e)$ for $e \in E$, $\lab_{H_0}(e_0) = S$,
		\item $\type(H_0)=\type(X)$, $\ext_H(\UniSelA) = (0,\UniSelA,1)$ for $\UniSelA \in \type(X)$.
	\end{itemize}
	
	The idea behind these hypergraphs is the same as in  the proof of \cref{theorem:lower_bound_EXPTIME}. The start hypergraph $H_0$ includes all the hyperedges from $E$, all nodes of $X$ and $Y$, and an additional hyperedge $e_0$. Replacement of $e_0$ by $H_a$ glues $H_0$'s nodes in a way that assembles $X$, and, likewise, replacement of $e_0$ by $H_b$ assembles $Y$.
	
	\begin{example}
		Below we provide an example of $X$, $Y$ and $H_0$ corresponding to those. (The selector names are omitted.)
		\begin{equation*}
			X \;=\; \vcenter{\hbox{{\tikz[baseline=.1ex]{
							\def\W{1}
							\def\H{0.8}
							\node[node,label=left:{\tiny $(1)$}] (U1) at ($(\W*0,\H*0)$) {};
							\node[node] (U2) at ($(\W*1,\H*0)$) {};
							\draw[-latex, thick] (U1) to[bend left=30] node[above] {$c$} (U2);
							\draw[-latex, thick] (U1) to[bend right=30] node[below] {$d$} (U2);
			}}}}
			\qquad
			Y \;=\; \vcenter{\hbox{{\tikz[baseline=.1ex]{
							\def\W{1}
							\def\H{0.8}
							\node[node,label=left:{\tiny $(1)$}] (U2) at ($(\W*0,\H*0)$) {};
							\draw[-latex, thick] (U2) to[out=120,in=60,looseness=30] node[above] {$c$} (U2);
							\draw[-latex, thick] (U2) to[out=-60,in=-120,looseness=30] node[below] {$d$} (U2);
			}}}}
			\qquad
			H_0 \;=\;
			\vcenter{\hbox{{\tikz[baseline=.1ex]{
							\def\HOR{0.4}
							\def\VER{2.1}
							\def\BEND{20}
							\node[hyperedge] (E1) at ($(\VER*0.5,-\HOR*3.5)$) {$S$};
							\foreach \i in {0,1}
							{
								\foreach \j in {0,...,7}
								{
									\pgfmathtruncatemacro\J{\j+1}
									\pgfmathtruncatemacro\JJ{\j+7}
									\node[node, label=left:{\tiny
										\ifnumequal{\i}{0}{\ifnumequal{\j}{0}{$(1)$}}{}
									}] (V\i\j) at ($(\VER*\i,-\HOR*\j)$) {};
								}			
							}
							\foreach \i in {0,1}
							{
								\foreach \j in {0,...,7}
								{
									\pgfmathtruncatemacro\I{\i+1}
									\ifnumgreater{1}{\i}{
										\pgfmathtruncatemacro\J{\j+1}
										\draw[-] (E\I) to[bend \ifnumgreater{\j}{3}{left}{right}=\BEND]  (V\i\j);
									}
									{}
									\ifnumgreater{\i}{0}{
										\pgfmathtruncatemacro\JJ{\j+7}
										\draw[-] (E\i) to[bend \ifnumgreater{\j}{3}{right}{left}=\BEND] (V\i\j);
									}
									{}
								}
							}
							\draw[thick,-latex] (V04) to[bend right=0] node[left] {$c$} (V05);
							\draw[thick,-latex] (V06) to[bend right=0] node[left] {$d$} (V07);
			}}}}
		\end{equation*}
	\end{example}
	
	Given a LCFRS $G$ over the terminal alphabet $\{a,b\}$ not generating the empty word, apply \cref{proposition:lcfrs-to-hrg-generalised}, which outputs a HRG $\Gamma_1 = \langle N_1,\{a,b\},\Sigma_1,P_1,S_1 \rangle$ such that $\Language(\Gamma_1) = \SG_{\UniSetSelsB}(G)$. Since $\UniSetSelsB$-string hypergraphs have the type $\UniSetSelsB \times\{1,2\}$, $\type(S_1) = \UniSetSelsB \times\{1,2\}$ as well. Therefore, we can assume that $S_1 = S$, the label that occurs in $H_0$. 
	
	Without loss of generality, all sets $N_0$, $T_0$, $N_1$, $\{a,b\}$ are disjoint. Let $P_2 = \{S^\prime \to S_0^\bullet, S^\prime \to H_0, a \to H_a, b \to H_b\}$, where $S^\prime$ is a fresh nonterminal. Define
	\[
	\Gamma = \langle N_0 \cup N_1\cup \{a,b, S^\prime\}, T_0, \Sigma_0 \cup \Sigma_1, P_0 \cup P_1 \cup P_2, S^\prime \rangle.
	\] 
	
	Starting with $(S^\prime)^\bullet$, one can either apply $S^\prime \to S_0^\bullet$ or $S^\prime \to H_0$. In the first case, the derivation can be prolonged using only rules of $\Gamma_0$, therefore, any resulting terminal hypergraph is from $\Language(\Gamma_0) = L$. In the second case, a derivation of a terminal hypergraph can be divided into two stages:
	\begin{enumerate}
		\item applications of rules from $P_1$;
		\item applications of the rules $a \to H_a$, $b \to H_b$.
	\end{enumerate}
	Applications of rules from $P_1$ eventually transform $H_0$ into a hypergraph $H^\prime$, which contains only labels from $\{a,b\}$, because other nonterminal labels cannot be eliminated at the second stage. Thus, the derivation of $H^\prime$ from $H_0$ corresponds to a derivation of some terminal string $u \in \{a,b\}^\ast$ in the LCFRS $G$, and $H^\prime = H_0[e_0/\SG_{\UniSetSelsB}(u)]$. Denote the hyperedges of $\SG_{\UniSetSelsB}(u)$ by $s_1,\ldots,s_n$, as in \cref{definition:generalized-string-graph}. Note that $n>0$ because $G$ does not generate the empty word. 
	
	Since HRG rule applications can be reordered, we can assume without loss of generality that rules at the second stage are applied to $s_1,\ldots,s_n$ in this order. If $u$ starts with $a$, then $\lab_{H^\prime}(s_1) = a$, and the application of $a \to H_a$ ``assembles'' the hypergraph $X$ (in the same manner as in \cref{example:aba}). The remaining rule applications do not alter the connections between the hyperedges and nodes of $X$, so ultimately one obtains $X$. Note that $X$ already belongs to $L$. Similarly, if $u$ starts with $b$, then the resulting terminal hypergraph is $Y$. Thus, 
	\[
	\Language(\Gamma)
	=
	\begin{cases}
		L \cup \{Y\} & \exists u \in \Language(G) \  u = bu^\prime \\
		L & \text{else}
	\end{cases}
	\]
	This provides a desired reduction: given $G$, we have constructed $\Gamma$ in polynomial time such that $\Language(\Gamma) \in \mathcal K$ $\Longleftrightarrow$ there is a word starting with $b$ in $\Language(G)$.
\end{proof}

One example of this theorem's instantiation is
\begin{corollary}
	For a fixed $k \in \Nat$, checking whether a HRG generates at least $k$ (non-isomorphic) hypergraphs is \EXPTIME{}-complete.
\end{corollary}
\begin{proof}
	Take any $k$ non-isomorphic hypergraphs $X_1,\ldots,X_k \in \mathcal{H}(a)$ that have the same number of hyperedges and at least one node each such that $\type(X_i) = \emptyset$ for each $i \in [k]$. Let $L = \{X_1,\ldots,X_{k-1}\}$, let $X = X_1$ and $Y = X_k$. $L,X,Y$ satisfy all properties of \cref{theorem:main-meta}.
\end{proof}

\begin{remark}\label{remark:nonemptiness-P}
	Checking whether the language of a given HRG is non-empty can be done in polynomial time. Indeed, $\Language(\Gamma) \ne \emptyset$ iff $\Language(G) \ne \emptyset$ for any linearisation $G$ of $\Gamma$, and the non-emptiness problem for context-free languages is in \textup{P}. Note that the class $\mathcal K$ consisting of all non-empty languages does not satisfy the conditions of \cref{theorem:main-meta}. In fact, deciding any property that holds for a HRG iff it holds for any its linearisation and which can be tested in polynomial time for context-free grammars is in \textup{P}, which of course is in line with \cref{theorem:main-meta}.
\end{remark}

\subsection{Non-Parikh Properties}

Below, we derive a general corollary of \cref{theorem:main-meta}, showing that it is \EXPTIME{}-hard to check whether a HRG generates at least one graph satisfying a given property, assuming this property does not rely on the Parikh vector of a hypergraph only.
\begin{definition}[Non-Parikh property]
	Let $\mathrm{Grph}_\ast = \mathcal{H}(\{\ast\})$ denote the set of graphs $G$ over the one-letter alphabet $\{\ast\}$ such that $\type(\ast) = \{1,2\}$. A \emph{non-Parikh graph property} is a subset $\mathcal G \subseteq \mathrm{Grph}_\ast$ satisfying the following properties:
	\begin{enumerate}
		\item $\type(G_1) = \type(G_2)$ for all $G_1,G_2 \in \mathcal G$ and 
		\item there are two graphs $X, Y \in \mathrm{Grph}_\ast$ with the same number of edges and with at least one node each such that $X \in \mathcal G$ while $Y \notin \mathcal G$.
	\end{enumerate}
	Given a graph $H$, let $\unl(H)$ denote the hypergraph $\langle V_H,E_H, \lab^\prime, \att_H, \ext_H \rangle$ where $\lab^\prime(e) = \ast$ for each $e \in E_H$ (that is, $\unl$ erases edge labels, placing the blank label instead.) 
\end{definition}

\begin{corollary}\label{corollary:non-Parikh}
	Let $\mathcal G$ be a non-Parikh graph property. Given a HRG $\Gamma$, checking whether $\unl(\Language(\Gamma)) \subseteq \mathcal G$ is \EXPTIME{}-hard, as well whether $\unl(\Language(\Gamma)) \cap \mathcal G \ne \emptyset$. In particular, checking that a HRG generates (only/some) (string/connected/Eulerian/Hamiltonian/acyclic) graphs is \EXPTIME{}-hard.
\end{corollary}
\begin{proof}
	Take a graph $X \in \mathcal G$ and a graph $Y \in \mathrm{Grph}_\ast \setminus \mathcal G$, with at least one node each, such that $\type(X)=\type(Y)$ and $\Psi(X)=\Psi(Y)$. Let $L = \{X\}$. Let $\mathcal K = \{\Language(\Gamma) \mid \unl(\Language(\Gamma)) \not\subseteq \mathcal G\}$. Then $L \in \HRL\setminus \mathcal K$ and $L \cup\{Y\} \in \mathcal K$. It remains to apply \cref{theorem:main-meta}. Finally, mind that \EXPTIME{} sets are closed under complement.
	
	For the second kind of decision problems, take $\mathcal K = \{\Language(\Gamma) \mid \unl(\Language(\Gamma)) \cap \mathcal G \ne \emptyset\}$ and switch $X$ and $Y$.
\end{proof}


\begin{remark}\label{remark:one-letter-HRG}
	In \cref{remark:one-letter-sg-HRG}, we noticed that the uniform membership problem for string-generating HRGs over a one-letter alphabet is in P. This is not the case for general HRGs over a one-letter alphabet. Consider the graph property $\mathcal G = \{\SG(\varepsilon)\}$, i.e.~the property of being equal to the string graph $\vcenter{\hbox{{\tikz[baseline=.1ex]{
					\node[node, label=left:{\tiny $(1)$}, label=right:{\tiny $(2)$}] (V1) at ($(0,0)$) {};
	}}}}$ representing the empty word. \cref{corollary:non-Parikh} implies that checking whether $\SG(\varepsilon)$ is generated by a HRG is \EXPTIME{}-complete, similarly to DTWTs and MCFGs. 
\end{remark}

One might be curious about whether a metatheorem similar to \cref{theorem:main-meta,corollary:non-Parikh} holds for repetition-free HRGs, i.e.~whether the same decision problems are \NP{}-hard for the latter. The answer is negative:

\begin{proposition}
	Fix a hypergraph $H_0 \in \mathrm{Grph}_\ast$. The decision problem that takes a repetition-free HRG $\Gamma$ and checks whether $H_0 \in \Language(\Gamma)$ is in \textup{P}.
\end{proposition}

\begin{proof}
	First, eliminate empty productions in $\Gamma$ in polynomial time. Secondly, delete all labels $x$ from $N \cup T$ such that $\vert \type(x) \vert  \ge \vert V_{H_0} \vert$. This does not affect whether $H_0$ is generated by the grammar, because no hyperedge with the number of attachment nodes greater than $\vert H_0 \vert$ can appear in a derivation of $H_0$. The resulting grammar  is thus of the order $r \le \vert V_{H_0} \vert$. Thirdly, rename selectors in such a way that, for each $x \in N \cup T$, $\type(x) = [k_x]$ for some $k_x \in \Nat$. Finally, eliminate chain productions in the new grammar (let us denote it by $\Gamma$ as well) using the construction of \cite[Theorem 1.8]{Habel92}. Recall that, in order to do that, one constructs the set
	$$
	\mathit{CHAIN} = \{A \to H \mid (A\to H)~\text{is chain}, \, A \Rightarrow^\ast_{\Gamma} H\}
	$$
	of the size $\vert \mathit{CHAIN} \vert \le \vert N \vert^2 \cdot r^{2r}$. Since $r$ is bounded by a constant, the number of chain productions is polynomial w.r.t.~$\Gamma$. Therefore, eliminating those, i.e.~constructing $\Gamma^\prime$ with productions being either non-chain ones or compositions of a production from $\mathit{CHAIN}$ with a non-chain one, is a polynomial-time procedure.
	
	In the resulting grammar $\Gamma^\prime = \langle N^\prime,T,\Sigma^\prime,P^\prime,S^\prime \rangle$, each rule application increases either the number of nodes or the number of hyperedges in a hypergraph. Consequently, if $(S^\prime)^\bullet \Rightarrow_{\Gamma^\prime}^\ast H_0$ is a derivation of length $m$, then $m \le \vert H_0 \vert$.
	
	Let $\vert V_{H_0} \vert = K$ and $\vert E_{H_0} \vert = L$. For each $i \in \{1,\ldots,K \}$, fix a blank label $\ast_i$ with $\type(\ast_i) = [i]$. Let $f(k,l)$ be the total number of repetition-free hypergraphs with $k$ nodes and $l$ hyperedges from $\mathcal{H}(\{\ast_i \mid i=1,\ldots, K\})$ (we call them unlabeled). Informally, we count the number of hypergraphs disregarding hyperedge labels.
	Then, the number of hypergraphs $H \in \mathcal{H}(N^\prime \cup T)$ such that $\vert V_H \vert = k$ and $\vert E_H \vert = l$ is upperbounded by $f(k,l) \vert N^\prime \cup T \vert^{l}$. Indeed, each hypergraph $H \in \mathcal{H}(N^\prime \cup T)$ with $l$ hyperedges is obtained from an unlabeled hypergraph by choosing some label from $N^\prime \cup T$ for each of its $l$ hyperedges. Therefore, the total number of hypergraphs with at most $K$ nodes and $L$ hyperedges does not exceed
	\[
	\sum\limits_{\substack{0 \le k \le K \\ 0 \le l \le L}} f(k,l) \vert N^\prime \cup T \vert^{l} \le g(K,L) \vert N^\prime \cup T \vert^{L}
	\]
	where $g(K,L) = \sum_{0 \le k \le K , 0 \le l \le L} f(k,l)$ is independent of $\Gamma$. 
	
	Finally, the algorithm simply checks all possible derivations of length $\le \vert H_0 \vert$ starting with $S^\prime$. The number of such derivations is at most 
	\[
	\sum\limits_{m=1}^{\vert H_0 \vert} (g(K,L) \vert N^\prime \cup T \vert^{L})^m \le \vert H_0 \vert (g(K,L) \vert N^\prime \cup T \vert^{L})^{\vert H_0 \vert}
	\]
	hence it is polynomial w.r.t.~$\Gamma^\prime$.
\end{proof}

\subsection{Time-Bounded Compatible Properties}

Let us now discuss for which decision problems the \EXPTIME{} lower bound established by \cref{theorem:main-meta} is tight. A good starting point is Filter Theorem \cite{Habel92,DrewesKH97}, which proves decidability of checking whether a HRG satisfies a given \emph{compatible property}. Below, we introduce a time-bounded version of compatible properties.

\begin{definition}[$f(n)$-compatible property]\label{definition:compatible}
	Fix an increasing function $t:\Nat \to \Nat$, a selector set $\Sigma_0$, and two disjoint $\Sigma_0$-typed alphabets $N_0,T_0$. Given a hypergraph $H$, we shall denote the set $\{e \in E_H \mid \lab_H(e) \in N_0\}$ of its nonterminal-labeled hyperedges by $E^N_H$.
	
	The following are dramatis personae of the definition.
	\begin{itemize}
		\item $\UniClassHRG \subseteq \HRG$ is some subclass of HRGs such that
		if $\langle N,T,\Sigma, P,S\rangle \in \UniClassHRG$, then $N \subseteq N_0$, $T \subseteq T_0$, $\Sigma \subseteq \Sigma_0$.
		\item $I$ is a countable set of indices, which we identify with $\Nat$ until the end of the definition. 
		\item $\UniProp$ is a predicate defined on pairs $(H,i)$ where $H \in \mathcal{H}(T_0)$ and $i \in I$.
		\item $\UniProp^\prime$ is a predicate defined on triples $(H,\ass,i)$ where $H \in \mathcal{H}(N_0 \cup T_0)$, $\ass: E^N_H \to I$ is a mapping from nonterminal hyperedges to indices, and $i \in I$. We require that 
		\begin{itemize}
			\item if $\UniProp^\prime(H,\ass,i)$ is true, then $i < t(\vert H \vert)$ and $\ass(e) < t(\vert H \vert)$ for each $e \in E^N_H$ and
			\item $\UniProp^\prime \in \DTIME(t(n))$.
		\end{itemize}
		\item $I_0$ is a computable function mapping finite subsets of $\Sigma_0$ to finite subsets of $I$ such that $I_0(X)$ is computed in time $\le t(|X|)$ for finite $X \subseteq \Sigma_0$.
		\item $\UniProp_0 \subseteq \mathcal{H}(T_0)$ is a predicate such that $\UniProp_0(H)$ is true iff so is $\UniProp(H,i)$ for some $i \in I_0(\type(H))$.
	\end{itemize}
	Finally, the main requirement is: for each HRG $\Gamma = \langle N,T,\Sigma,P,S\rangle \in \UniClassHRG$, for each derivation $A^\bullet \Rightarrow R \Rightarrow^\ast H$ using this grammar, and for each $i \in I$, $\UniProp(H,i)$ holds iff there is a mapping $\ass:E^N_R \to I$ such that $\UniProp^\prime(R,\ass,i)$ holds and $\UniProp(H(e),\ass(e))$ holds for each $e \in E^N_R$. Here $H(e)$ is the hypergraph obtained from the hyperedge $e$ in the derivation $R \Rightarrow^\ast H$. (Consequently, $H$ is obtained from $R$ by replacing each $e \in E^N_R$ by $H(e)$.)
	
	If all the above is satisfied, the property $\UniProp_0$ is called \emph{$t(n)$-compatible (in the class $\UniClassHRG$)}. A property is \emph{\EXPTIME{}-compatible} if it is $2^{p(n)}$-compatible for some polynomial $p(n)$.
\end{definition}
\begin{theorem}\label{theorem:bounded-filter-theorem}
	For any \EXPTIME{}-compatible property, deciding whether a given HRG $\Gamma \in \UniClassHRG$ generates some hypergraph satisfying this property is in \EXPTIME{}.
\end{theorem}

\begin{proof}
	The proof follows the lines of the one of Filter Theorem, see \cite[Theorem 2.6.2]{DrewesKH97}. Let $t(n) = 2^{p(n)}$ for some (increasing) polynomial $p(n)$; the notation of Definition \ref{definition:compatible} is used throughout. Given $\Gamma = \langle N,T,\Sigma,P,S\rangle \in \UniClassHRG$ of order $r$, define $\Gamma^\prime = \langle N^\prime,T,\Sigma,P^\prime,S^\prime\rangle$ as follows:
	\begin{itemize}
		\item $N^\prime = N \times \{i \in I \mid i < t(\vert \Gamma \vert)\}$;
		\item $P^\prime$ consists of rules 
		\begin{itemize}
			\item $(A,i) \to (R,\ass)$ such that $(A\to R) \in P$ and $\UniProp^\prime(R,\ass,i)$ holds, where $(R,\ass) = \langle V_R,E_R,\att_R, \lab,\ext_R \rangle$ with $\lab(e) = \lab_R(e)$ if $\lab_R(e) \in T$ and $\lab(e) = (\lab_R(e),\ass(e))$ otherwise,
			\item $S^\prime \to (S,i)^\bullet$ for $i \in I_0(\type(S))$.
		\end{itemize}
	\end{itemize}
	When constructing $P^\prime$, we should only consider $i < t(\vert R \vert)$ and $\ass(e) < t(\vert R \vert)$, because otherwise $\UniProp^\prime(R,\ass,i)$ is false, by Definition \ref{definition:compatible}. The number of such functions $\ass$, for fixed $(A \to R) \in P$, can therefore be upperbounded by $t(\vert R \vert)^{\vert E_R \vert}$. The number of rules of the form $S^\prime \to (S,i)^\bullet$ does not exceed the time required for computing $I_0(\type(S))$, which in turn is at most $t(r)$. Thus, the total number of rules in $P^\prime$ is bounded by $ t(\vert \Gamma \vert)^{\vert \Gamma \vert + 1} \vert P \vert + t(\vert \Gamma \vert)$, and hence the size of $\Gamma^\prime$ is at most $2^{q(\vert \Gamma \vert)}$ for some polynomial $q(n)$. Besides, checking that a rule $(A,i) \to (R,\ass)$ is valid, i.e.~$\UniProp^\prime(R,\ass,i)$ is true, is done in exponential time w.r.t.~the size of $(R,\ass,i)$. Since $i<t(\vert R \vert) = 2^{p(\vert R \vert)}$, the size of $i$ is $\le p(\vert R \vert)$, as $i$ is written in binary; similar holds for each $\ass(e)$. Thus, the total time required to build $P^\prime$ and verify correctness of its rules is exponential.
	
	A straightforward induction shows that $\Language(\Gamma^\prime) = \{H \in \Language(\Gamma) \mid \UniProp_0(H)\}$ (see details in \cite{Habel92}). Therefore, checking the property of interest is the same as checking nonemptiness of $\Language(\Gamma^\prime)$. This is done in polynomial time, as for context-free grammars (\cref{remark:nonemptiness-P}).
\end{proof}

Many compatible properties, in the sense of \cite{DrewesKH97, Habel92}, can be shown to be \EXPTIME{}-compatible as well, which yields a tight EXPTIME bound on the complexity of corresponding decision problems, assuming those properties are non-Parikh as well. We provide one particular example.

\begin{proposition}
	Being/non-being a string graph is an \EXPTIME{}-compatible property in the class $\HRG$.
\end{proposition}

\begin{proof}
	Fix $N_0,T_0,\Sigma_0$, as in Definition \ref{definition:compatible}. Below, we introduce some notation and terminology. 
	\begin{itemize}
		\item Fix a fresh ``blank'' label $\ast$ of type $\{1,2\}$. A graph $H \in \mathcal{H}(\{\ast\})$ is called \emph{unlabeled}. For $H \in \mathcal{H}(N_0 \cup T_0)$, let $\unl(H)$ be obtained by replacing each label of type $\{1,2\}$ in $H$ by $\ast$.
		\item Let $\pi \eqdef (\ast \to \SG(\ast\ast))$ be a production that replaces a single edge with two ones, connected consecutively. 
		\item A hypergraph $H^{\prime}$ is called \emph{minimal below $H$} if $H^{\prime} \Rightarrow^\ast_\pi H$ and if there is no $H^{\prime\prime}$ such that $H^{\prime\prime} \Rightarrow_\pi H^{\prime}$. Note that, for each $H$, there is a unique minimal hypergraph below it, and it can be found in polynomial time. Indeed, it suffices to check if $H$ contains a subgraph of the form $\vcenter{\hbox{{\tikz[baseline=.1ex]{
						\foreach \i in {1,...,3}
						{
							\node[node] (V\i) at ($(0.8*\i-0.8,0)$) {};
						}
						\draw[-latex, thick] (V1) -- node[above] {$\ast$} (V2);
						\draw[-latex, thick] (V2) -- node[above] {$\ast$} (V3);
		}}}}$ and, if it does, replace this subgraph by a single $\ast$-labeled edge; repeat this procedure until a minimal hypergraph is obtained.
		\item Let us call a node $v$ in a graph $H$ \emph{bad} if it is not external ($v \notin \ran(\ext_H)$) or either its in-degree or its out-degree does not equal 1. 
		\item A \emph{cycle} in a graph is a sequence $v_1,e_1,\ldots,v_l,e_l$ such that \\ $(\att_H(e_i)(1),\att_H(e_i)(2)) = (v_i,v_{i+1})$ for $i=1,\ldots,l-1$ and \\ $(\att_H(e_l)(1),\att_H(e_l)(2)) = (v_l,v_{1})$, with $v_1,\ldots,v_l$ distinct.
		\item A \emph{good graph} is an unlabeled graph  with neither bad nodes nor cycles. A hypergraph is \emph{bad} if it is not good.
	\end{itemize}
	Informally, a good graph consists of several string-like subgraphs, and all its nodes are external. Note that each good graph is simple, so the number of good graphs with $n$ nodes and with a type $X$ is $\le 2^{\mathit{poly}(n,\vert X \vert)}$.
	
	Define 
	\begin{itemize}
		\item the set of indices $I$ as $\{H \in \mathcal{H}(\{\ast\}) \mid H ~\text{is good} \} \cup \{\mathit{bad}\}$ where $\mathit{bad}$ is a fresh symbol,
		\item $\UniProp(H,i)$ to be true if either $i = H^\prime$ is good and $H^\prime$ is minimal below $\unl(H)$ or $i = \mathit{bad}$ and the minimal hypergraph below $\unl(H)$ is bad,
		\item $\UniProp^\prime(H,\ass,i)$ to be true if one of the two holds:
		\begin{itemize}
			\item $\ass(e) = \mathit{bad}$ for some $e \in E^N_H$ and $i = \mathit{bad}$, or
			\item $\ass(e)$ is good for each $e \in E^N_H$, $\type(\ass(e)) = \type(e)$, and $\UniProp(H[\ass],i)$ holds, (Here $H[\ass]$ is obtained from $H$ by replacing $e$ by $\ass(e)$ for each $e \in E^N_H$.)
		\end{itemize}
		\item $I_0$ to be a constant function mapping any finite subset of $\Sigma_0$ to $\{\SG(\ast)\}$.
	\end{itemize}
	Now, it is straightforward to check that all the requirements of Definition \ref{definition:compatible} for \EXPTIME{}-compatible properties are met. In particular, if $\UniProp^\prime(H,\ass,i)$ is true, then $i$ either equals $\mathit{bad}$ or $i$ is a good graph of type $\type(H)$. In the latter case, the size of $i$ is bounded polynomially by $\vert \type(H) \vert$, since $i$ is a simple graph. Recall that Definition \ref{definition:compatible} identifies indices with natural numbers; if the size of a natural number (viewed as a string) is bounded by $\mathit{poly}(\vert \type(H) \vert)$, then the number itself is bounded by $2^{\mathit{poly}(\vert \type(H) \vert)}$. The same reasoning applies to $\ass(e)$ for each $e \in E^N_H$. In the case $\ass(e)$ is good for each $e \in E^N_H$, verifying $\UniProp^\prime(H,\ass,i)$ is done deterministically in exponential time: one replaces $e$ by $\ass(e)$ in $H$ for each $e \in E^N_H$ (this is done in polynomial time), then applies the inverse of the rule $\pi$ while possible (this is polynomial time as well), and finally checks if the resulting graph is isomorphic to $i$ (this is in \NP{}).
	
	Finally, according to the definition, $\UniProp_0(H)$ is true iff so is $\UniProp(H,\SG(\ast))$, i.e.~$\SG(\ast)$ is minimal below $\unl(H)$. Applications of the rule $\pi$ to $\SG(\ast)$ yield exactly all the unlabeled string graphs, therefore, the latter is equivalent to $H$ being a string graph. This proves that being a string graph is \EXPTIME{}-compatible.
	
	In order to prove that non-being a string graph is \EXPTIME{}-compatible, redefine
	\[
	I_0(X) \eqdef \{H \in \mathcal{H}(\{\ast\}) \mid H ~\text{is good}, \ \type(H) = X , \ H \ne \SG(\ast) \} \cup \{\mathit{bad}\}
	\] 
	The size of each good graph with type $X$ is polynomially bounded by $\vert X \vert$, therefore, $I_0(X)$ is computed in time $2^{\mathit{poly}(\vert X \vert )}$.
	
	Given a hypergraph $H$, let $H_0$ be minimal below $\unl(H)$. As noticed earlier, $H$ is a string graph iff $H_0 = \SG(\ast)$. Therefore, $H$ is not a string graph iff either $H_0$ is good and $H_0 \ne \SG(\ast)$ or $H_0$ is bad. This is equivalent to $\UniProp_0(H)$ being true with $I_0$ defined as above.
\end{proof}

Since not being a string graph is both non-Parikh and \EXPTIME{}-compatible, we get
\begin{corollary}\label{corollary:string-generating}
	The problem whether a given HRG is string-generating is \EXPTIME{}-complete.
\end{corollary}

\section{Discussion and Conclusion}\label{section:conclusion}

What are the benefits of having studied complexity of the uniform membership problem for HRGs? First, this shows that the choice of the definition of a hypergraph, which varies in different papers, might affect complexity of this problem significantly. It is a consequence of the paper's results that converting a string-generating HRG into an equivalent repetition-free one cannot be done in polynomial time, assuming that $\NP{}\ne\EXPTIME{}$. 

Secondly, the complexity results can be used for other kinds of graph grammars. One particular example is fusion grammars \cite{KreowskiKL17}, introduced in \cite{KreowskiKL17}. They extend hyperedge replacement grammars by the operation of fusing hyperedges within a hypergraph; these grammars are motivated, in particular, by interactions of DNA molecules. In \cite{Pshenitsyn25}, it is proved that the uniform membership problem for fusion grammars is decidable and lies in NEXPTIME. Since a HRG can be transformed into a fusion grammar in polynomial time, the following is implied by \cref{theorem:lower_bound_EXPTIME}.
\begin{corollary}\label{corollary:fusion-grammar}
	The uniform membership problem for fusion grammars is EXPTIME-hard.
\end{corollary}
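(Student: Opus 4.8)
The plan is to reduce the uniform membership problem for string-generating HRGs (known to be EXPTIME-hard by Theorem \ref{theorem:lower_bound_EXPTIME}) to the uniform membership problem for fusion grammars. Fusion grammars are a graph-rewriting formalism in which derivation steps \emph{fuse} pairs of hyperedges labelled by a symbol and its ``marked'' complement, rather than replacing a single hyperedge by a graph; so the essential content of the proof is to simulate hyperedge replacement by hyperedge fusion.

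First I would recall the standard observation (from the fusion-grammar literature) that a single hyperedge-replacement step $A \to R$ can be emulated by fusion: pre-supply, in the start graph or via ``multipliers''/connection rules, disjoint copies of the right-hand side $R$ in which the would-be glued nodes carry fresh fusion hyperedges, and let the fusion rules splice a copy of $R$ into the place of an $A$-labelled hyperedge. Concretely, given a string-generating HRG $G$ with rules $A_i \to R_i$, I would build in polynomial time a fusion grammar $\mathcal{F}_G$ whose disjoint start graph contains (suitably many copies of) each $R_i$ decorated with complementary fusion labels, together with a component encoding the input string graph, arranged so that a complete fusion derivation that consumes all fusion hyperedges exists if and only if $G$ derives that string graph. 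Care is needed with the \emph{number} of available copies of each $R_i$: since a derivation of a given terminal string graph of size $n$ uses at most a polynomial (in $n$ and $|G|$) number of rule applications, it suffices to include polynomially many copies, keeping the construction polynomial-time.

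The key steps, in order, are: (1) fix the precise definition of fusion grammars and of the fusion rule schema being used (marked vs.\ unmarked copies, multiplication of connected components); (2) describe the polynomial-time translation of $(G,H)$ into $(\mathcal{F}_G, \text{goal})$, specifying the decorated copies of each right-hand side and the fusion labels that mark the attachment positions; (3) prove soundness — every HRG derivation $S \Rightarrow^\ast H$ yields a fusion derivation reaching the designated goal graph, by induction on derivation length, replacing each replacement step by the fusion of the corresponding decorated copy; (4) prove completeness — every fusion derivation reaching the goal graph can be rearranged (fusion steps on disjoint parts commute) into one that mimics an HRG derivation, so $H \in L(G)$; (5) observe that string-generation is preserved so that the reduction stays within the hard instances supplied by Theorem \ref{theorem:lower_bound_EXPTIME}, and conclude EXPTIME-hardness.

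The main obstacle I expect is the completeness direction: a fusion grammar has no inherent ``control'' preventing copies of right-hand sides from being fused together in spurious ways, or leaving some fusion hyperedges unconsumed, so I must argue that any derivation reaching the exact goal graph — in particular one with no leftover fusion hyperedges and the correct terminal labels — is forced to correspond to a legitimate HRG derivation. This typically requires a careful bookkeeping invariant (e.g.\ tracking which fusion labels are ``live'' and showing each can be matched in essentially one way) plus a commutation argument that normalises the order of fusion steps; getting the decoration of the right-hand sides right so that this rigidity holds is the technical heart of the proof.
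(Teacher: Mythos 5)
Your overall strategy coincides with the paper's: the corollary is obtained by combining Theorem \ref{theorem:lower_bound_EXPTIME} with the fact that HRGs can be embedded into fusion grammars, and the paper dispatches it in one sentence by citing that known embedding (from the fusion-grammar literature) rather than re-deriving it. So the route is the same; the issue is with one quantitative claim in your elaboration.

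The genuine gap is your assertion that ``a derivation of a given terminal string graph of size $n$ uses at most a polynomial (in $n$ and $|G|$) number of rule applications,'' which you use to justify pre-supplying only polynomially many decorated copies of each right-hand side. This is false for exactly the instances produced by Theorem \ref{theorem:lower_bound_EXPTIME}: the target graphs there are $\SG(xy)$ and $\SG(yx)$, of constant size, while the derivations pass through $\SG_6(w)$ for an arbitrary $w \in L(G)$ and then shrink it with \emph{empty} productions, so their length is unbounded in terms of the target's size. More structurally, if derivations of $H$ were always polynomially bounded in $|H|$ and $|\Gamma|$, one could nondeterministically guess a derivation and the uniform membership problem would be in NP, contradicting EXPTIME-hardness (unless $\mathrm{NP}=\mathrm{EXPTIME}$); the whole point of Section \ref{section:chain} is that empty and chain productions prevent such a bound. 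The fix is the alternative you mention only in passing: rely on the fusion-grammar derivation mechanism itself (arbitrary multiplication of connected components before fusion), as in the standard simulation of HRGs by fusion grammars, so that the number of available copies of each right-hand side need not be fixed in advance and the reduction stays polynomial in $|\Gamma|$ and $|H|$ regardless of derivation length. With that substitution your soundness/completeness outline is the standard argument and the corollary follows.
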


It is essential that hypergraphs in fusion grammars are repetition-allowing, because a fusion rule application, starting even with a repetition-free hypergraph, can transform it into a hypergraph with repetitions of attachment nodes. Thus, one should be aware that it is hard to avoid \EXPTIME{}-hardness of the uniform membership problem for fusion grammars. (Note that little is known about non-uniform membership for those.)

Thirdly, the results of the paper increase one's awareness of the difference between non-uniform and uniform membership problems for HRGs. For repetition-free ones, both problems are \NP{}-complete but, for repetition-allowing ones, we again have \NP{} versus \EXPTIME{}. 

Fourthly, and probably most importantly, the methods used in the \EXPTIME{} lower bound turned out to be very prolific, applicable to a very wide range of decision problems concerning HRGs. \cref{theorem:main-meta} is a nice generalisation I did not expect to discover initially.

Finally, the complexity results of the paper incorporate HRGs in the general complexity picture of mildly context-sensitive grammar formalisms. Although HRGs are closer to LCFRSs than to MCFGs because the former two are both information-lossless, unlike the latter, it turns out that there is still a way to exploit flexibility of hypergraphs to encode \EXPTIME{}-complete problems in HRGs, even if they generate only string graphs. 

Looking at the complexity landscape (\cref{table:uniform-membership-mcsgf,table:uniform!}) gives rise to one interesting general question about mildly context-sensitive grammar formalisms.
\begin{quote}
	Is there a (natural) grammar formalism such that it generates multiple context-free languages and such that the uniform membership problem for it is in P?
\end{quote}
We know that tree-adjoining grammars are polynomial-time parsable in a uniform way, but they generate a weaker class of languages. To my best knowledge, a polynomial-time parsable formalism for multiple context-free languages has not yet been discovered. 

\section*{Acknowledgments}

I am grateful to Frank Drewes for discussing the topic with me, in particular, for his suggestion to look at automata over permutation groups, and for helpful remarks that helped me to improve presentation.

\section*{Funding}

This work was funded by the Ministry of Science and Higher Education of the Russian Federation (Grant No. 075-15-2024-529).





\bibliographystyle{plain}
\bibliography{HRG_Complexity_bib}

@article{BjorklundBE16,
	title={Uniform vs. Nonuniform Membership for Mildly Context-Sensitive Languages: A Brief Survey},
	volume={9}, 
	DOI={10.3390/a9020032}, 
	number={2}, 
	journal={Algorithms},
	publisher={MDPI AG},
	author={Bjorklund, Henrik and Berglund, Martin and Ericson, Petter}, 
	year={2016},
	pages={32}
}

@article{BjorklundDES21, 
	title={Uniform parsing for hyperedge replacement grammars}, 
	volume={118}, 
	DOI={10.1016/j.jcss.2020.10.002}, 
	journal={Journal of Computer and System Sciences}, 
	publisher={Elsevier BV}, 
	author={Bjorklund, Henrik and Drewes, Frank and Ericson, Petter and Starke, Florian}, 
	year={2021}, 
	pages={1–27}
}

@inproceedings{BjorklundE13,
	title={A note on the complexity of deterministic tree-walking transducers},
	author={Henrik Bj{\"o}rklund and Petter Ericson},
	booktitle={Workshop on Non-Classical Models for Automata and Applications},
	year={2013},
	pages={69-83}
}

@misc{Drewes_personal,
	author = "{Frank Drewes}",
	title = {Personal communication},
	year = {2023}
}

@incollection{DrewesKH97,
	author    = {Frank Drewes and
	Hans{-}J{\"{o}}rg Kreowski and
	Annegret Habel},
	editor    = {Grzegorz Rozenberg},
	title     = {Hyperedge Replacement Graph Grammars},
	booktitle = {Handbook of Graph Grammars and Computing by Graph Transformations,
	Volume 1: Foundations},
	pages     = {95--162},
	publisher = {World Scientific},
	year      = {1997},
	doi = {10.1142/9789812384720\_0002}
}

@incollection{Engelfriet97,
	author       = {Joost Engelfriet},
	editor       = {Grzegorz Rozenberg and
	Arto Salomaa},
	title        = {Context-Free Graph Grammars},
	booktitle    = {Handbook of Formal Languages, Volume 3: Beyond Words},
	pages        = {125--213},
	publisher    = {Springer},
	year         = {1997},
	doi          = {10.1007/978-3-642-59126-6\_3}
}

@article{EngelfrietH91, 
	title={The string generating power of context-free hypergraph grammars}, 
	volume={43},
	DOI={10.1016/0022-0000(91)90018-z}, 
	number={2}, 
	journal={Journal of Computer and System Sciences},
	publisher={Elsevier BV},
	author={Engelfriet, Joost and Heyker, Linda},
	year={1991}, 
	pages={328–360}
}

@book{Habel92,
	author       = {Annegret Habel},
	title        = {Hyperedge Replacement: Grammars and Languages},
	series       = {Lecture Notes in Computer Science},
	volume       = {643},
	publisher    = {Springer},
	year         = {1992},
	doi          = {10.1007/BFB0013875}
}

@article{KajiNSK92,
	title = {The Universal Recognition Problems for Multiple Context-Free Grammars and for Linear Context-Free Rewriting Systems},
	author = {Yuichi Kaji and Ryuichi Nakanisi and Hiroyuki Seki and Tadao Kasami},
	pages = {78--88},
	volume = {E75-D},
	number = {1},
	journal = {IEICE TRANSACTIONS on Information and Systems},
	year = {1992}
}

@misc{KajiNSK94, 
	title={The Computational Complexity Of The Universal Recognition Problem For Parallel Multiple Context Free Grammars},
	volume={10},
	DOI={10.1111/j.1467-8640.1994.tb00008.x},
	number={4},
	journal={Computational Intelligence},
	publisher={Wiley},
	author={Kaji, Yuichi and Nakanishi, Ryuichi and Seki, Hiroyuki and Kasami, Tadao},
	year={1994}, 
	pages={440–452}
}

@book{Kallmeyer10, 
	title={Parsing Beyond Context-Free Grammars}, 
	DOI={10.1007/978-3-642-14846-0},
	journal={Cognitive Technologies}, 
	publisher={Springer Berlin Heidelberg}, 
	author={Laura Kallmeyer}, 
	year={2010}
}

@article{Khashaev22,
	url = {https://doi.org/10.1515/dma-2022-0033},
	title = {On the membership problem for finite automata over symmetric groups},
	author = {Arthur A. Khashaev},
	pages = {383--389},
	volume = {32},
	number = {6},
	journal = {Discrete Mathematics and Applications},
	doi = {doi:10.1515/dma-2022-0033},
	year = {2022},
	lastchecked = {2023-10-30}
}

@INPROCEEDINGS{Kozen77,
	author = {Dexter Kozen},
	booktitle = {2013 IEEE 54th Annual Symposium on Foundations of Computer Science},
	title = {Lower bounds for natural proof systems},
	year = {1977},
	volume = {},
	issn = {0272-5428},
	pages = {254-266},
	keywords = {},
	doi = {10.1109/SFCS.1977.16},
	publisher = {IEEE Computer Society},
	address = {Los Alamitos, CA, USA}
}

@Inbook{Kreowski86,
	author={Kreowski, Hans-J{\"o}rg},
	title={Rule Trees Represent Derivations in Edge Replacement Systems},
	bookTitle={The Book of L},
	year={1986},
	publisher={Springer Berlin Heidelberg},
	address={Berlin, Heidelberg},
	pages={217--232},
	doi="10.1007/978-3-642-95486-3_18"
}

@inproceedings{KreowskiKL17,
	author       = {Hans{-}J{\"{o}}rg Kreowski and
	Sabine Kuske and
	Aaron Lye},
	editor       = {Juan de Lara and
	Detlef Plump},
	title        = {Fusion Grammars: {A} Novel Approach to the Generation of Graph Languages},
	booktitle    = {Graph Transformation - 10th International Conference, {ICGT} 2017,
	Held as Part of {STAF} 2017, Marburg, Germany, July 18-19, 2017, Proceedings},
	series       = {Lecture Notes in Computer Science},
	volume       = {10373},
	pages        = {90--105},
	publisher    = {Springer},
	year         = {2017},
	doi          = {10.1007/978-3-319-61470-0\_6}
}

@article{KuhlmannSJ18, 
	title={On the Complexity of {CCG} Parsing}, 
	volume={44}, 
	DOI={10.1162/coli_a_00324}, 
	number={3}, 
	journal={Computational Linguistics}, 
	publisher={MIT Press}, 
	author={Kuhlmann, Marco and Satta, Giorgio and Jonsson, Peter}, 
	year={2018}, 
	pages={447–482}
}

@InProceedings{LohreyRZ22,
	author =	{Lohrey, Markus and Rosowski, Andreas and Zetzsche, Georg},
	title =	{{Membership Problems in Finite Groups}},
	booktitle =	{47th International Symposium on Mathematical Foundations of Computer Science (MFCS 2022)},
	pages =	{71:1--71:16},
	series =	{Leibniz International Proceedings in Informatics (LIPIcs)},
	ISBN =	{978-3-95977-256-3},
	ISSN =	{1868-8969},
	year =	{2022},
	volume =	{241},
	editor =	{Szeider, Stefan and Ganian, Robert and Silva, Alexandra},
	publisher =	{Schloss Dagstuhl -- Leibniz-Zentrum f{\"u}r Informatik},
	address =	{Dagstuhl, Germany},
	doi =		{10.4230/LIPIcs.MFCS.2022.71}
}

@article{Pshenitsyn25,
	title = {On decidability and expressive power of fusion grammars},
	journal = {Theoretical Computer Science},
	volume = {1052},
	pages = {115420},
	year = {2025},
	issn = {0304-3975},
	doi = {https://doi.org/10.1016/j.tcs.2025.115420},
	author = {Tikhon Pshenitsyn}
}

@inproceedings{Sims71,
	author = {Charles C. Sims},
	title = {Computation with Permutation Groups},
	year = {1971},
	isbn = {9781450377867},
	publisher = {Association for Computing Machinery},
	address = {New York, NY, USA},
	url = {https://doi.org/10.1145/800204.806264},
	doi = {10.1145/800204.806264},
	booktitle = {Proceedings of the Second ACM Symposium on Symbolic and Algebraic Manipulation},
	pages = {23–28},
	numpages = {6},
	location = {Los Angeles, California, USA},
	series = {SYMSAC '71}
}

\end{document}